\date{February 11, 2025}
\def\rz{\mathbb{R}}
\def\gD{\mathfrak{D}}
\def\gR{\mathfrak{R}}
\def\gZ{\mathfrak{Z}}
\def\cE{\mathcal{E}}
\def\cF{\mathcal{F}}
\def\ri{\mathrm{i}}
\def\rd{\mathrm{d}}
\newtheorem{corollary}{Corollay}
\newtheorem{definition}{Definition}
\newtheorem{lemma}{Lemma}
\newtheorem{theorem}{Theorem}
\title[Excess Charge]{Bound on the Excess Charge of Generalized
  Thomas-Fermi-Weizs\"acker Functionals}
\author{Rafael D.~Benguria}
\author{Heinz Siedentop}
\begin{document}

\address{Rafael D. Benguria\\ Instituto de F\'isica\\
  Pontificia Universidad Cat\'olica de Chile\\
  Av.~Vicu\~na Mackenna 4860\\ Macul\\ 7820436, Santiago\\ Chile}
\email{rbenguri@uc.cl} \address{Heinz Siedentop\\ Mathematisches
  Institut\\ Ludwig-Maximilians-Universit\"at M\"unchen\\
  Theresienstr. 39\\ 80333 M\"unchen, Germany\\ and Munich Center for
  Quantum Science and Technology (MCQST)\\ Schellingstr. 4\\ 80799
  M\"unchen\\ Germany} \email{h.s@lmu.de}

\begin{abstract}
  We bound the number of electrons $Q$ that an atom can bind in excess
  of neutrality for density functionals generalizing the classical
  Thomas-Fermi-Weizs\"acker functional: instead of the classical power
  $5/3$ more general powers $p$ are considered. For $3/2<p<2$ we prove
  the excess charge conjecture, i.e., that $Q$ is uniformly bounded in
  the atomic number $Z$. The case $p=3/2$ is critical: the behavior
  changes from a uniform bound in $Z$ to a linear bound at the
  critical coupling $4\sqrt\pi$ of the nonlinear term. We also improve
  the linear bound for all $p\geq6/5$.
\end{abstract}

\maketitle
\tableofcontents
\section{Introduction\label{Einleitung}}
	
Density functionals have been essential tools to analyze physical
properties of atoms, molecules, and physics. In quantum mechanics on
can trace them back to Thomas \cite{Thomas1927} and Fermi
\cite{Fermi1927,Fermi1928}. Weizs\"acker \cite{Weizsacker1935} added
an inhomogeneity correction meant to improve the behavior of the
density in regions of rapid change of the external potential. Benguria
et al. \cite{Benguriaetal1981} generalized the classical
Thomas-Fermi-Weiz\-s\"acker functional to a more general functional and
initiated its mathematical analysis. Written in the square root $\psi$
of the electron density $\rho$ it reads
\begin{equation}
  \label{eq:1}
  \begin{split}
    &\cE_{p}: H^1(\rz^3:\rz)\to \rz\\
    &\psi \mapsto \underbrace{A\int_{\rz^3} |\nabla\psi|^2}_{T[\psi]:=}
      +\underbrace{\frac\gamma{p}\int_{\rz^3}|\psi|^{2p}}_{F[\psi]:=}
      -\underbrace{\int_{\rz^3}V\psi^2}_{A[\psi]:=}
           + \underbrace{\frac12\int_{\rz^3}\rd x  \int_{\rz^3} \rd y{\psi(x)^2\psi(y)^2\over|x-y|}}_{D[\psi^2]:=}.
  \end{split}
\end{equation}
with $p\geq1$, $\gamma>0$, and
\begin{equation}
  \label{eq:V}
  V(x)=\sum_{k=1}^K{Z_k\over |x-R_k|}
\end{equation}
where $\gZ:=(Z_1,...,Z_K)\in \rz_+^K$ and $\gR:=(R_1,...,R_K)\in \rz^{3K}$.

The generalized Thomas-Fermi term $F$ arises by a semiclassical
approximation from a many electron Hamiltonian with kinetic energy
operator $|-\ri \nabla|^\alpha$. The relation between the exponents is
\begin{equation}
  p={\alpha+3\over3}
\end{equation}
with the prefactor given by
\begin{equation}
  \gamma=(3\pi^2)^\frac\alpha3.
\end{equation}
The range $2>p>4/3$ corresponds to $3>\alpha>1$.

The Euler equation for the minimizer $\psi$ reads
\begin{equation}
  \label{eq:2}
  -A\Delta\psi +[\gamma|\psi|^{2p-2}-(\underbrace{V-|\psi|^2*|\cdot|^{-1}}_{\varphi:=})]\psi =0 
\end{equation}
weakly in $H^1(\rz^3:\rz_+)$.

For $p>3/2$ we can scale the Euler equation by making the ansatz
\begin{equation}
  \label{eq:scaling}
  \psi(x)= a_p\tilde\psi(b_px),\ Z_k= c_p\tilde Z_k,\ R_k= \tilde R_k/b_p
\end{equation}
with
\begin{equation}
a_p := {A^\frac1{4p-6}\over\gamma^\frac1{2p-3}},\ b_p := {A^\frac{2-p}{4p-6}\over\gamma^\frac1{4p-6}},\ c_p := {A^\frac{3p-4}{4p-6}\over \gamma^\frac1{4p-6}}.
\end{equation}
Then
\begin{equation}
  \label{eq:scaledEuler}
-\Delta\tilde\psi +[|\tilde\psi|^{2p-2}-(V_{\tilde\gR,\tilde\gZ}-|\psi|^2*|\cdot|^{-1})]\tilde\psi =0.
\end{equation}
Thus we may assume for $p>3/2$ that $A=\gamma=1$ and retrieve the
general case by \eqref{eq:scaling}.

If $\psi$ minimizes $\cE_{p}(H^1(\rz^3:\rz))$ then we write
$N:=\int_{\rz^3}\psi^2$ for its particle number $Z:=Z_1+...+Z_K$ and
\begin{equation}
  \label{eq:Q}
  Q:=N-Z
\end{equation}
for its excess charge. The excess charge has been studied
previously. Lower bounds on the excess charge are $Q\geq0$ for
$p\geq 4/3$ \cite[Lemma 12]{Benguriaetal1981} and $Q>0$ for $p>5/3$
and $K=1$ \cite[Lemma 13]{Benguriaetal1981}.

The following is known about upper bounds: In the classical case
Benguria and Lieb \cite[Formula (43)]{BenguriaLieb1985} showed
$Q\leq 270.74 K$. Solovej \cite[Proposition 14]{Solovej1990} improved
this value to $Q\leq 178.03K$. Based on an immediate generalization of an
argument of Lieb \cite{Lieb1984} it is clear that $Q<Z$ for all $p$.

The focus of this work will be on generalizing and sharpening those
upper bounds. We will start in Section \ref{s2} with the
improvement $Q\leq 0.5211 \, Z$ for rather general $p$.

We will also generalize the bound uniform in the nuclear charge $Z$ similar
to \cite[Formula (43)]{BenguriaLieb1985} and Solovej
\cite[Proposition 14]{Solovej1990}. We will carry these results
through for $p\in(3/2,2)$ in Section \ref{s3}.

The value $p=3/2$ is critical, since the energetic dominance shifts
from the Thomas-Fermi term for $p>3/2$ to the Weizs\"acker term for
$p<3/2$. We will treat the critical case by different methods and will
show for $\gamma<\gamma_c:=4\sqrt\pi$ a bound proportional to $Z$
whereas for $\gamma\geq\gamma_c$ we have $Q=0$. This is done in
Section \ref{s4}.

\section{Improving the bound $Q\leq Z$ for exponents
  $p\geq6/5$}\label{s2}

We consider the atomic case, i.e., $K=1$. Because of translational
invariance we can and will assume $R_1=0$ throughout this
section. Following Benguria and Tubino \cite{BenguriaTubino2022} we
will improve \eqref{eq:s4}. We begin with an inequality by Nam
\cite{Nam2012}
\begin{equation}
  \label{eq:s21}
  \beta :=\inf\left\{\frac{\frac12 \int_{\mathbb{R}^3} \frac{|x|^2+|y|^2}{|x-y|} \, \psi(x)^2 \psi(y)^2 \, \rd x \rd y }{\int_{\mathbb{R}^3}\psi(x)^2 \, \rd x  \int_{\mathbb{R}^3}|x| \, \psi(x)^2\, \rd x} \Big| 0\neq\psi\in H^1(\rz^3:\rz)\right\}
  \ge 0.8218.
\end{equation}
\begin{theorem} 
  Let $\psi$ be a non-vanishing solution of \eqref{eq:2} for $K=1$
  in $H^1(\rz^3)$. Then, for all $\gamma \ge 0$ and all $p \ge 6/5$,
\begin{equation} 
\int_{\mathbb{R}^3} \psi(x)^2 \, \rd x \le \frac{5}{4 \beta} Z \le 1.5211 \, Z.
\label{eq:s5}
\end{equation}
\end{theorem}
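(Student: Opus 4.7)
The plan is to multiply the Euler equation \eqref{eq:2} (with $K=1$ and $V(x)=Z/|x|$) by the weighted test function $|x|^{2}\psi$ and integrate over $\rz^{3}$, producing a virial-type identity. Integration by parts on the Laplacian, using $\nabla(|x|^{2}\psi) = 2x\psi+|x|^{2}\nabla\psi$ and $\int x\cdot\nabla(\psi^{2})\,\rd x=-3N$, converts the kinetic term into $A\bigl(-3N+\int|x|^{2}|\nabla\psi|^{2}\bigr)$. The attractive and repulsive terms become $-Z\int|x|\psi^{2}$ and, after symmetrizing $(x,y)$, $\tfrac{1}{2}\int\!\!\int\frac{|x|^{2}+|y|^{2}}{|x-y|}\psi(x)^{2}\psi(y)^{2}\,\rd x\,\rd y$; the latter is exactly the numerator of Nam's inequality \eqref{eq:s21}.

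Applying \eqref{eq:s21} replaces the repulsion by the lower bound $\beta N\int|x|\psi^{2}$. After discarding the nonnegative weighted Thomas-Fermi term $\gamma\int|x|^{2}|\psi|^{2p}$ (whose finiteness is guaranteed by the hypothesis $p\ge 6/5$ through Sobolev or Gagliardo-Nirenberg interpolation) and invoking the sharp weighted Hardy inequality $\int|x|^{2}|\nabla\psi|^{2} \ge \tfrac{9}{4}N$ in $\rz^{3}$, the identity becomes an upper estimate of the form $(\beta N - Z)\int|x|\psi^{2} \le c\,AN$ for an explicit constant $c$.

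The argument is then closed by producing a matching lower bound on $\int|x|\psi^{2}$. Multiplying \eqref{eq:2} by $\psi$ itself yields the standard virial identity $A\|\nabla\psi\|_{2}^{2} + \gamma\|\psi\|_{2p}^{2p} + 2D[\psi^{2}] = Z\int\psi^{2}/|x|$, from which Hardy's inequality $\|\nabla\psi\|_{2}^{2} \ge \tfrac{1}{4}\int\psi^{2}/|x|^{2}$ together with Cauchy-Schwarz $\bigl(\int\psi^{2}/|x|\bigr)^{2} \le N\int\psi^{2}/|x|^{2}$ bounds $\int\psi^{2}/|x|$ above by a multiple of $NZ/A$. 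A further Cauchy-Schwarz $N^{2}\le \int|x|\psi^{2}\cdot\int\psi^{2}/|x|$ then flips this into a \emph{lower} bound on $\int|x|\psi^{2}$, which, when combined with the earlier upper bound, gives an inequality of the desired shape $\beta N \le c'Z$.

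The principal technical obstacle is precise constant-tracking: obtaining exactly the factor $5/(4\beta)$ in the final bound requires using the sharp forms of each intermediate inequality (Hardy, weighted Hardy, and Nam) and, in particular, may require retaining rather than discarding certain positive contributions from $\gamma\int|x|^{2}|\psi|^{2p}$ or $2D[\psi^{2}]$ at intermediate steps, so that the residual arithmetic balances $3-\tfrac{9}{4}=\tfrac{3}{4}$ against the Cauchy-Schwarz losses to yield $\beta N \le \tfrac{5}{4}Z$. The hypothesis $p \ge 6/5$ enters to guarantee that all the weighted integrals above are well-defined and that the virial identity with multiplier $|x|^{2}\psi$ is rigorously valid.
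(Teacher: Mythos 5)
Your overall architecture is the paper's: multiply the Euler equation by $|x|^2\psi$, bound the kinetic contribution below by $-\tfrac34 N$ (your derivation of Nam's estimate via integration by parts and the weighted Hardy inequality $\int|x|^2|\nabla\psi|^2\ge\tfrac94\int\psi^2$ is correct), bound the repulsion below by $\beta N^2 I$ with $I:=N^{-1}\int|x|\psi^2$, and close with a lower bound on $I$. The gap is in that last step, and it is quantitative but fatal for the stated constant. Your chain (Hardy, then Cauchy--Schwarz, then $T[\psi]\le A[\psi]$ from pairing the equation with $\psi$ and discarding positive terms) gives $\int\psi^2/|x|\le 4NZ$ and hence only $I\ge 1/(4Z)$; feeding this into $\beta N I\le ZI+\tfrac34$ yields $N\le 4Z/\beta\approx 4.87\,Z$, weaker even than the known bound $N<2Z$. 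To get the constant $5/(4\beta)$ one needs $I\ge 3/Z$ --- exactly what makes $\tfrac34 N\le\tfrac14 NZI$ and hence $\tfrac54 ZNI\ge\beta N^2I$. The paper obtains this from two sharper inputs: (i) the hydrogen ground-state bound $-\Delta-\tilde Z|x|^{-1}\ge-\tilde Z^2/4$, optimized over $\tilde Z$, which gives $(A[\psi]/Z)^2\le T[\psi]\,N$ and is a factor $4$ better than Hardy plus Cauchy--Schwarz; and (ii) the combination $3T[\psi]+(5p-6)F[\psi]=A[\psi]$ of the two virial identities, which for $p\ge 6/5$ gives $3T[\psi]\le A[\psi]$, a factor $3$ better than your $T[\psi]\le A[\psi]$. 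Together these yield $A[\psi]\le\tfrac13 NZ^2$, i.e.\ $\int\psi^2/|x|\le NZ/3$, hence $I\ge3/Z$ by Cauchy--Schwarz.

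Point (ii) also corrects your reading of the hypothesis: $p\ge6/5$ is not needed for integrability of the weighted integrals; it is precisely the condition $5p-6\ge0$ under which the virial combination gives $3T[\psi]\le A[\psi]$. Your instinct that the missing factor must come from ``retaining positive contributions'' points in the right direction, but the positive term that must be exploited is $(5p-6)F[\psi]$ in the scaling virial identity, not the weighted Thomas--Fermi or repulsion terms in the $|x|^2$-multiplied equation (those are discarded in the paper's proof as well).
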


\begin{proof}
  Since the groundstate energy of hydrogen is $-1/4$ (Schr\"odinger
  \cite[Equation (19)]{Schrodinger1926I}), we have for any positive
  nuclear charge $\tilde Z$
\begin{eqnarray}
  \label{eq:RR}
  \int_{\rz^3}|\nabla f|^2 - \int_{\rz^3} {\tilde Z\over|x|}|f(x)|^2\rd x \geq -{\tilde Z^2\over4}\int_{\rz^3}|f|^2
\end{eqnarray}
for any $f\in H^1(\rz^3)$.  Picking $f:=\psi$ this can be recast as
\begin{equation}
  \frac1Z A[\psi]\leq \frac1{\tilde Z} T[\psi]+ {\tilde Z \over 4}N
\end{equation}
using the notation for the various parts of the energy in \eqref{eq:1}.
Optimizing in $\tilde Z$ yields
\begin{equation}
\left(\frac{A}{Z}\right)^2 \le K \, N.
\label{eq:s13}
\end{equation}
Now, from (\ref{eq:s11}) and (\ref{eq:s13}) we get
\begin{equation}
A\le \frac13 N \, Z^2.
\label{eq:s14}
\end{equation}
If we define 
\begin{equation}
I=\frac{ \int_{\mathbb{R}^3}|x| \, \psi(x)^2\, \rd x}{ \int_{\mathbb{R}^3}\psi(x)^2 \, \rd x}=\frac{1}{N} \,  \int_{\mathbb{R}^3}|x| \, \psi(x)^2\, \rd x
\label{eq:s15}
\end{equation}
and use the Schwarz inequality, we get 
\begin{equation}
N^2 =\left(\int_{\mathbb{R}^3}\psi(x)^2 \, \rd x \right)^2 \le  \int_{\mathbb{R}^3}|x| \, \psi(x)^2\, \rd x\int_{\mathbb{R}^3} \frac{\psi(x)^2}{|x|} \, \rd x
\le \left(I \, N \right)  \frac{A}{Z}.
\label{eq:s16}
\end{equation}
Using (\ref{eq:s14}) and (\ref{eq:s16}) we finally get 
\begin{equation}
I \ge \frac{3}{Z}.
\label{eq:s17}
\end{equation}
To conclude we use Nam's method \cite{Nam2012}.  We multiply
(\ref{eq:2}) by $\psi \cdot |x|^2$ and integrate over $\mathbb{R}^3$.
From Nam's result we have
\begin{equation}
(-\Delta \psi, |x|^2 \psi) \ge -\frac34 \,(\psi,\psi) = -\frac34 N.
\label{eq:s18}
\end{equation}
Also,
$\gamma\int_{\mathbb{R}^3}|\psi(x)|^{2p-2}\psi(x)^2|x|^2\, \rd
x\ge0$. Hence,
\begin{equation}
\int_{\mathbb{R}^3} \varphi(x) |x|^2 \psi(x)^2 \, \rd x \ge -\frac34  N, 
\label{eq:s19}
\end{equation}
and using 
$$
\varphi (x) = \frac{Z}{|x|} - \int_{ \mathbb{R}^3} \frac{\psi(y)^2}{|x-y|} \, \rd y,
$$
we get
\begin{equation}
Z \int_{\mathbb{R}^3}|x| \, \psi(x)^2\, \rd x- \frac12 \int_{\mathbb{R}^3} \int_{\mathbb{R}^3} \frac{|x|^2+|y|^2}{|x-y|} \, \psi(x)^2 \psi(y)^2 \, \rd x \rd y \ge -\frac34 N,
\label{eq:s20}
\end{equation}
where we did the standard symmetrization in the second
integral. Moreover, from (\ref{eq:s15}), the first term in
(\ref{eq:s20}) is given by $ZNI$.

From the definition of $\beta$, $N$, and $I$, we have
\begin{equation}
\frac12 \int_{\mathbb{R}^3} \frac{|x|^2+|y|^2}{|x-y|} \, \psi(x)^2 \psi(y)^2 \, \rd x \rd y \ge \beta N^{2} I.
\label{eq:s22}
\end{equation}
From (\ref{eq:s20}),  (\ref{eq:s22}), and the definition of $I$, we get
\begin{equation}
Z  N  I \ge \beta N^2 I-\frac34 N \ge  \beta N^2 I -\frac14  N Z I
\label{eq:s23}
\end{equation}
where the last inequality follows from (\ref{eq:s17}). Rearranging
(\ref{eq:s23}), we get
$$
\frac54 Z  N  I \ge \beta N^2 I
$$
yielding
\begin{equation}
N \le \frac{5}{4 \beta} Z.
\label{eq:s24}
\end{equation}
Using Nam's numerical lower bound \eqref{eq:s21} on $\beta$ in
(\ref{eq:s24}) gives
\begin{equation}
N\le 1.5211 Z.
\label{eq:s25}
\end{equation}
\end{proof}

\section{Some preparatory results \label{newsection}}

We start with a central observation keeping the constant $\gamma$ in
this section, since we will use the result also for $p=3/2$ where it
cannot be scaled out. Set
\begin{definition}
  \begin{equation}
    \label{eq:p}
    \begin{split}
      P:\rz^3\setminus\left\{R_1,...,R_K\right\}&\to\rz_+\\
        x&\mapsto \sqrt{4\pi\psi(x)^2+\varphi(x)^2}.
    \end{split}
  \end{equation}
\end{definition}
The following allows us to bound the excess charge from above.
\begin{lemma}
  The function $P$ is subharmonic, and, for $K=1$ and $R_1=0$, the function
  $rP(r))$ is convex, monotone decreasing in $r$, and
  $\lim_{r\to\infty}rP(r)=Q$. In particular, for all $r>0$
  \begin{equation}
    \label{eq:rp}
    rP(r) \geq Q.
  \end{equation}
\end{lemma}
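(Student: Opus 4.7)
The plan has three parts matching the three assertions, with the first two reducing to a short computation and the last requiring a decay estimate.

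For subharmonicity, I would differentiate $P^{2}=4\pi\psi^{2}+\varphi^{2}$ twice on $\rz^{3}\setminus\{R_{1},\ldots,R_{K}\}$. Using the Euler equation \eqref{eq:2} in the scaled form $\Delta\psi=(\gamma\psi^{2p-2}-\varphi)\psi$ (taking $A=1$ as in \eqref{eq:scaledEuler}) and the Poisson identity $\Delta\varphi=4\pi\psi^{2}$ valid away from the nuclei, a short calculation gives $\Delta(\psi^{2})=2\gamma\psi^{2p}-2\varphi\psi^{2}+2|\nabla\psi|^{2}$ and $\Delta(\varphi^{2})=8\pi\varphi\psi^{2}+2|\nabla\varphi|^{2}$. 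The cross terms $\pm 8\pi\varphi\psi^{2}$ cancel in $\Delta P^{2}=4\pi\Delta(\psi^{2})+\Delta(\varphi^{2})$, leaving
\begin{equation*}
\Delta P^{2}=8\pi\gamma\psi^{2p}+8\pi|\nabla\psi|^{2}+2|\nabla\varphi|^{2}.
\end{equation*}
Componentwise Cauchy--Schwarz applied to $\nabla P=(4\pi\psi\nabla\psi+\varphi\nabla\varphi)/P$ bounds $|\nabla P|^{2}\leq 4\pi|\nabla\psi|^{2}+|\nabla\varphi|^{2}$, and combining with the identity $\Delta P^{2}=2P\Delta P+2|\nabla P|^{2}$ gives $2P\Delta P\geq 8\pi\gamma\psi^{2p}\geq 0$, so $P$ is subharmonic.

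For $K=1$, $R_{1}=0$, the positive minimizer, and hence $\varphi$ and $P$, are spherically symmetric. Subharmonicity of a radial function amounts to $(r^{2}P'(r))'\geq 0$, i.e.\ $rP''+2P'\geq 0$. Since $(rP(r))''=rP''+2P'$, the function $r\mapsto rP(r)$ is convex on $(0,\infty)$.

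For the limit, I would apply Newton's theorem to the radial density $\psi^{2}$, yielding
\begin{equation*}
r\varphi(r)=Z-4\pi\int_{0}^{r}\psi(s)^{2}s^{2}\,\rd s-4\pi r\int_{r}^{\infty}\psi(s)^{2}s\,\rd s.
\end{equation*}
As $r\to\infty$ the first two terms tend to $Z-N=-Q$. Granted that the minimizer has enough pointwise decay, the tail $r\int_{r}^{\infty}\psi^{2}s\,\rd s$ vanishes and $r\psi(r)\to 0$, so $rP(r)=\sqrt{4\pi(r\psi(r))^{2}+(r\varphi(r))^{2}}\to\sqrt{Q^{2}}=Q$ (using $Q\geq 0$ from \cite[Lemma~12]{Benguriaetal1981}). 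A convex function on $(0,\infty)$ with a finite limit at infinity must be nonincreasing, since any positive supporting slope at some $r_{0}$ would force the graph, lying above its tangent line, to diverge; hence $rP(r)$ decreases to $Q$, which is \eqref{eq:rp}.

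The subharmonicity and convexity thus reduce to the one-line cancellation of $8\pi\varphi\psi^{2}$ combined with Cauchy--Schwarz and elementary calculus. The main obstacle is the pointwise decay $r\psi(r)\to 0$ together with the integrability needed for the Newton tail to vanish: this is standard for TFW-type minimizers but requires a subsolution or Agmon-type argument for the Euler equation in the region where $\varphi$ has settled to its Coulombic asymptotic $-Q/r$.
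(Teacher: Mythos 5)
Your proof is correct and follows essentially the same route as the paper: the cancellation of the $8\pi\varphi\psi^{2}$ terms via the Euler and Poisson equations, the Cauchy--Schwarz bound $|\nabla P|^{2}\leq 4\pi|\nabla\psi|^{2}+|\nabla\varphi|^{2}$, the radial identity $\Delta P=\tfrac1r(rP)''$, and Newton's theorem plus the decay $r\psi(r)\to0$ (the paper's Lemma \ref{abfall}) for the limit. You are in fact slightly more careful than the paper in two spots — deducing monotone decrease from convexity together with the finite limit at infinity, and noting that $\sqrt{Q^{2}}=Q$ needs $Q\geq0$ — and note that the vanishing of the Newton tail already follows from $\psi\in L^{2}$ alone, so no extra pointwise decay is needed there.
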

Note that we write -- in abuse of notation -- $P(r)$ instead of $P(x)$
in the radial case.
\begin{proof}
  We compute
  \begin{equation}
    \label{Deltap2}
    \begin{split}
      &2P\Delta P+ 2(\nabla P)^2=\Delta P^2= 8\pi\psi\Delta\psi 
      +8\pi(\nabla\psi)^2 + 2\varphi\Delta\varphi + 2 (\nabla\varphi)^2\\
      =&8\pi\psi^2(\gamma\psi^{2p-2}-\varphi)+8\pi(\nabla\psi)^2+8\pi\varphi\psi^2 + 2 (\nabla\varphi)^2\\
      =&8\pi\psi^2\gamma\psi^{2p-2}+8\pi(\nabla\psi)^2+ 2 (\nabla\varphi)^2
     \end{split}
   \end{equation}
   where we used the TFW equation \eqref{eq:2}. By Schwarz'
   inequality $(\nabla P)^2 \leq 4\pi(\nabla\psi)^2+(\nabla\varphi)^2$
   and therefore \eqref{Deltap2} implies $\Delta P\geq0$. This shows
   the subharmonicity.

  Since
  \begin{equation}
    \Delta P(x) = \frac1r{\rd^2\over \rd r^2} r P(r)
  \end{equation}
  the convexity and monotonicity statements follow.

  By Lemma \ref{abfall} $r\psi(r)\to0$ and by Newton's lemma
  $r\varphi(r)\to-Q$ as $r\to\infty$. This shows the limiting
  statement. -- Inequality \eqref{eq:rp} is an immediate consequence
  of the monotonicity and the limiting statement.
\end{proof}

\begin{lemma}
  For $\lambda \in (0,1)$ and $3/2<p<2$, every minimizer
  $\psi$ of $\cE_p$ with $\varphi$ and $V$ given by
  (\ref{eq:1}) and (\ref{eq:V}) respectively, satisfies,
  \begin{equation}
    \lambda \gamma \psi^{2p-2} \le \varphi + c_p(\lambda)  \gamma^{1/(3-2p)},
    \label{eq:t1}
  \end{equation}
  with
  \begin{equation}
    \label{eq:t2}
    c_p(\lambda) = {\left(2\pi\right)^{\frac{p-1}{2p-3}} \over \lambda^{\frac{p-1}{2p-3}} (1-\lambda)^{\frac{2-p}{2p-3}}} (2p-3) {(2-p)^{\frac{2-p}{2p-3}}\over (p-1)^\frac{2p-2}{2p-3}}.
\end{equation}

\end{lemma}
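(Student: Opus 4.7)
The plan is to establish a differential inequality for the function $h := \lambda\gamma\psi^{2p-2} - \varphi$ and extract the claimed bound from a pointwise argument at its maximum.

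First, I would differentiate twice to obtain
\[
\Delta\psi^{2p-2} = (2p-2)\psi^{2p-3}\Delta\psi + (2p-2)(2p-3)\psi^{2p-4}|\nabla\psi|^2,
\]
where the gradient term is nonnegative because $p>3/2$. Substituting the Euler equation \eqref{eq:2} for $\Delta\psi$ yields the pointwise lower bound $\Delta\psi^{2p-2} \geq (2p-2)\psi^{2p-2}(\gamma\psi^{2p-2}-\varphi)$. Combining this with Poisson's identity $\Delta\varphi = 4\pi\psi^2$ (valid away from the $R_k$) and using the decomposition $\gamma\psi^{2p-2}-\varphi = (1-\lambda)\gamma\psi^{2p-2}+h$ produces the subharmonicity-type inequality
\[
\Delta h \geq \lambda(2p-2)\gamma\psi^{2p-2}\bigl[(1-\lambda)\gamma\psi^{2p-2}+h\bigr] - 4\pi\psi^2.
\]

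Next, I would verify that $h$ attains its supremum. At each nucleus $\varphi$ has a positive singularity and hence $h\to -\infty$; at infinity, the decay of $\psi$ (from the earlier Lemma \ref{abfall}) and of $\varphi$ (via Newton's theorem) force $h\to 0$. Consequently, either $\sup h\leq 0$, in which case the lemma is immediate, or $h$ attains a positive maximum $M$ at some $x_0$ where $\Delta h(x_0)\leq 0$. Writing $s:=\psi(x_0)^2$ and dividing by $s>0$ in the differential inequality at $x_0$ yields the algebraic condition
\[
\lambda(1-\lambda)(2p-2)\gamma^2 s^{2p-3} + \lambda(2p-2)\gamma M s^{p-2} \leq 4\pi.
\]

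It remains to extract $M \leq c_p(\lambda)\gamma^{1/(3-2p)}$ by viewing the left-hand side as a function of $s>0$ and minimizing. The critical point satisfies $s^{p-1} = M(2-p)/[(1-\lambda)\gamma(2p-3)]$; back-substitution and careful collection of exponents produce exactly the stated form of $c_p(\lambda)$, with the factor $(2\pi)^{(p-1)/(2p-3)}$ arising naturally from the combination $4\pi/(2p-2) = 2\pi/(p-1)$. The main obstacle is the middle paragraph: one needs the decay afforded by Lemma \ref{abfall} and Newton's theorem to cooperate so that $h$ genuinely tends to $0$ at infinity, legitimizing the use of $\Delta h(x_0)\leq 0$ at an interior maximum. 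Once this qualitative input is in place, the remaining step is a routine Young-type optimization whose only delicate feature is matching the prescribed prefactor.
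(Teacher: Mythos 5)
Your proposal is correct and is essentially the paper's own proof: the same differential inequality $\Delta\psi^{2p-2}\geq(2p-2)\psi^{2p-2}(\gamma\psi^{2p-2}-\varphi)$ (using $2p-3\geq0$), the same Poisson equation away from the nuclei, the same $\lambda$-split, and the identical one-variable optimization (your minimization of the left-hand side in $s$ is exactly the paper's maximization of $f(u)=4\pi u^{(2-p)/(p-1)}-\gamma^2\lambda(1-\lambda)(2p-2)u$, and it reproduces $c_p(\lambda)\gamma^{1/(3-2p)}$). The only difference is presentational: you evaluate $\Delta h\le 0$ at an attained interior maximum of $h=\lambda\gamma\psi^{2p-2}-\varphi$, while the paper subtracts the constant $d$ first and shows the set $\{v>0\}$ is empty by subharmonicity — two phrasings of the same maximum-principle step, requiring the same decay inputs at the nuclei and at infinity.
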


Note that when $p=5/3$, i.e., in the case of non-relativistic TFW theory,
\begin{equation}
c_\frac53(\lambda)=\frac94 \, \pi^2 \frac{1}{\lambda^2(1-\lambda)}
\end{equation}
as in \cite[Formula (13)]{BenguriaLieb1985}.

\begin{proof} 
Put $u:=\psi^{2p-2}$. Then (\ref{eq:2}) implies that
\begin{equation}
-\Delta u +  (2p-2) (\gamma u -\varphi) u \le 0, 
\label{eq:t4}
\end{equation}
provided $p \ge 3/2$. On the other hand (\ref{eq:V}) and (\ref{eq:2}) imply 
\begin{equation}
-\Delta \varphi = -4\pi \psi^2 = - 4\pi u^{1/(p-1)}, 
\label{eq:t5}
\end{equation}
away from $R_k$, $k=1,2, \dots K$. Now, set 
\begin{equation}
v := \gamma \lambda u -\varphi -d,
\label{eq:t6}
\end{equation}
with $d$ a constant to be chosen later. Then, from (\ref{eq:t5}) and
(\ref{eq:t6}) we get
\begin{equation}
  -\Delta v= -\gamma \lambda\Delta u + \Delta \varphi \le -\gamma \lambda (2p-2)(\gamma u -\varphi) u+ 4 \pi u^{\frac{1}{p-1}}.
\label{eq:t7}
\end{equation}
Set $S=\{x |v > 0\}$. It follows from (\ref{eq:t6}) and (\ref{eq:V})
that $R_k \notin S$, all $k=1, 2, \dots ,K $. On $S$,
\begin{equation}
 \varphi =\gamma \lambda  \, u - d - v \le -d +\gamma \lambda  \, u, 
\label{eq:t8}
\end{equation}
and replacing in (\ref{eq:t7}), on $S$, 
$$
-\Delta v \le -\gamma \lambda(2p-2)\gamma(1-\lambda)u^2 -\gamma
\lambda(2p-2) d \, u + 4 \pi u^{\frac{1}{p-1}}
$$
Now, if $3/2<p<2$, then $1/2<p-1 <1$,  $1<1/(p-1)<2$, and 
$$
4 \pi u^{\frac{1}{p-1}} \le {\gamma}^2 u^2 \lambda(1-\lambda)(2p-2) + b \, u, 
$$
i.e.,
$$
4 \pi u^{\frac{2-p}{p-1}} \le {\gamma}^2 u \lambda(1-\lambda)(2p-2) + b. 
$$
Consider the function 
\begin{equation}
f(u) =4 \pi u^{\frac{2-p}{p-1}}  - {\gamma}^2 u \lambda(1-\lambda)(2p-2),
\label{eq:t20}
\end{equation}
with $(2-p)/(p-1) <1$, i.e., $p>3/2$. Since $(2-p)/(p-1) <1$, the function $f(u)$ for $u>0$ has only one maximum, say $\hat u$, on the interval $(0,\infty)$. 
$\hat u$ is given by, 
$$
\hat u = \left[\frac{{\gamma}^2 (p-1)^2  \lambda(1-\lambda)}{2\pi(2-p)}\right]^{(p-1)/(3-2p)},
$$
and, 
$$
f(\hat u) = \left(2 \pi\right)^{\frac{p-1}{2p -3}}\left({\gamma}^2  \lambda(1-\lambda)\right)^{\frac{2-p}{3-2p}}2(2p-3)(p-1)^{\frac{1}{3-2p}}(2-p)^{\frac{2-p}{2p-3}}.
$$
Now choose $b=f(\hat u)$, which in turn implies choosing $d=b/(\gamma \lambda(2p-2))$.
With that choice of $b$, $-\Delta v \le 0$ on $S$. Hence $v$ is subharmonic on $S$. Since $v=0$ on $\partial S$, we conclude that $v<0$ on $S$.
This in turn implies that $S$ is empty and we are done.
\end{proof}

\begin{corollary}
  Assume $x\in\rz^3$ such that $\varphi(x)\leq 0$. Then
  \begin{equation}
    \label{eq:fi-negative}
    \psi(x)\leq {2^{4p-3\over 4p-6}\pi^{p-1\over2p-3}(2p-3)^{1\over2p-2}\over(3p-4)^{3p-4\over2(p-1)(2p-3)}}. 
  \end{equation}
\end{corollary}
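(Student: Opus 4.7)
The plan is a direct application of the preceding Lemma combined with a one-parameter optimization. Imposing the hypothesis $\varphi(x)\le 0$ in the Lemma's inequality $\lambda\gamma\psi^{2p-2}\le\varphi+c_p(\lambda)\gamma^{1/(3-2p)}$ gives the pointwise estimate
$$\psi(x)^{2p-2}\le\frac{c_p(\lambda)}{\lambda}\,\gamma^{1/(3-2p)-1}$$
for every $\lambda\in(0,1)$. Since the left-hand side is independent of $\lambda$, I will minimize the right-hand side over $\lambda\in(0,1)$ to obtain the sharpest bound. (The stated corollary contains no $\gamma$, which I read as implicitly using the rescaled convention $\gamma=A=1$ available for $p>3/2$; otherwise, an explicit power of $\gamma$ propagates through the argument.)

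Reading off $c_p(\lambda)$ from the Lemma, the $\lambda$-dependent part of $c_p(\lambda)/\lambda$ sits entirely in the denominator and takes the form $\lambda^{a}(1-\lambda)^{b}$ with
$$a=1+\frac{p-1}{2p-3}=\frac{3p-4}{2p-3},\qquad b=\frac{2-p}{2p-3}.$$
Setting the logarithmic derivative of $\lambda^{a}(1-\lambda)^{b}$ to zero gives the unique interior critical point
$$\lambda^{\ast}=\frac{a}{a+b}=\frac{3p-4}{2p-2},\qquad 1-\lambda^{\ast}=\frac{2-p}{2p-2},$$
which is the maximum of $\lambda^a(1-\lambda)^b$ since the function vanishes at both endpoints (note $3p-4>0$ and $2-p>0$ in the relevant range).

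Substituting $\lambda^{\ast}$ back into $c_p(\lambda)/\lambda$, the numerator factor $(2-p)^{(2-p)/(2p-3)}$ of $c_p$ cancels against $(1-\lambda^{\ast})^{-b}$ (up to a surviving power of $2p-2$), and the denominator factor $(p-1)^{(2p-2)/(2p-3)}$ of $c_p$ combines with the $(p-1)$-component of $(\lambda^{\ast})^{-a}$, so only explicit powers of $2$, $\pi$, $3p-4$, and $2p-3$ remain. Taking the $1/(2p-2)$-th root yields the desired upper bound on $\psi(x)$.

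The argument is entirely elementary once the preceding Lemma is granted; the only real obstacle is careful bookkeeping of the many exponents involving $p-1$, $2-p$, $3p-4$, $2p-3$, and $2p-2$, which intertwine in the cancellations but never create any analytic subtlety.
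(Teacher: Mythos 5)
Your proof is exactly the paper's: drop $\varphi$ in the preceding Lemma, minimize $c_p(\lambda)/\lambda$ over $\lambda\in(0,1)$, and the optimizer $\lambda^{*}=(3p-4)/(2p-2)$ you compute is precisely the one the paper records. One caveat, though: the final bookkeeping that you flag as ``the only real obstacle'' and then leave unverified is exactly where a discrepancy sits --- carrying it out gives
$\bigl(c_p(\lambda^{*})/\lambda^{*}\bigr)^{1/(2p-2)}
=2^{\frac{3}{4p-6}}\,\pi^{\frac{1}{4p-6}}\,(2p-3)^{\frac{1}{2p-2}}\,(3p-4)^{-\frac{3p-4}{2(p-1)(2p-3)}}$,
which is a factor $2\sqrt{\pi}$ \emph{smaller} than the displayed right-hand side of \eqref{eq:fi-negative} (check $p=5/3$: the optimization yields $\psi^{4/3}\le\tfrac{64}{3}\pi^{2}$, hence $\psi\le 2^{9/2}\pi^{3/2}3^{-3/4}$, not $2^{11/2}\pi^{2}3^{-3/4}$), so your argument establishes the corollary a fortiori but does not literally reproduce the stated constant, and you should not assert that it does without checking.
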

\begin{proof}
  Drop $\varphi$ in \eqref{eq:t1}, move all $\lambda$ to the
  right, and minimize the right side in $\lambda$. The minimum occurs
  at $(3p-4)/(2p-2)$. The minimal value is the right hand side of
  \eqref{eq:fi-negative}.
\end{proof}

We also have the bound
\begin{lemma}
   \label{lemmag1}
   For $p\geq3/2$ and positive $\gamma$ we have
   \begin{equation}
     \label{eq:psi-V}
     \psi^{2p-2}\leq \gamma^{-1}V.
   \end{equation}
 \end{lemma}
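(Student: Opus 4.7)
The plan is to set $u := \psi^{2p-2}$ and note that the derivation carried out in the preceding lemma (multiplying the Euler equation \eqref{eq:2} by $(2p-2)\psi^{2p-3}$ and using that $p\geq3/2$ produces a favourable sign for the $|\nabla\psi|^2$ term) already yields, away from the nuclei,
\begin{equation*}
  -\Delta u + (2p-2)(\gamma u - \varphi)\, u \;\leq\; 0.
\end{equation*}
I would then introduce $w := \gamma u - V$ and the open set
$S := \{x\in\rz^3\setminus\{R_1,\dots,R_K\} : w(x)>0\}$
and prove by a maximum–principle argument that $S=\emptyset$.

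On $S$ one has $\gamma u > V \geq \varphi$, the second inequality because $\varphi = V - \psi^2 * |\cdot|^{-1}$ with $\psi^2\geq0$. Hence $(\gamma u - \varphi)u \geq 0$ on $S$, and the displayed inequality gives $\Delta u \geq 0$ on $S$. Since $V$ is harmonic on $\rz^3\setminus\{R_1,\dots,R_K\}$ and a punctured neighbourhood of each $R_k$ lies outside $S$ (because $V(x)\to+\infty$ while $u$ stays bounded there, so $w\to-\infty$), we obtain
\begin{equation*}
 -\Delta w \;=\; -\gamma\Delta u + \Delta V \;\leq\; 0 \quad \text{on } S,
\end{equation*}
i.e.\ $w$ is subharmonic on $S$.

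For the boundary data: by continuity $w=0$ on $\partial S$, and since $\psi(x)\to0$ as $|x|\to\infty$ (Lemma \ref{abfall}) together with $V(x)\to0$, we have $w(x)\to0$ at infinity. A Phragm\'en--Lindel\"of-type application of the maximum principle on the (possibly unbounded) domain $S$ --- carried out by intersecting with a ball $B_R$ so large that $|w|<\varepsilon$ on $\partial B_R$, using $w\leq 0$ on $\partial S$, and letting $\varepsilon\to0$ --- forces $w\leq0$ throughout $S$, contradicting the definition of $S$ unless $S=\emptyset$. This yields $\gamma \psi^{2p-2} \leq V$ pointwise, which is \eqref{eq:psi-V}.

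The main obstacle is the handling of the maximum principle on the unbounded set $S$ in the presence of the Coulomb singularities; this is resolved once one notes that $S$ stays away from the nuclei and that $w$ tends to zero at infinity, both of which follow from standard regularity and decay of TFW minimizers.
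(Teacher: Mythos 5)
Your proposal is correct and follows essentially the same route as the paper: your $w=\gamma u-V$ is just $-\gamma$ times the paper's comparison function $f=\gamma^{-1}V-\psi^{2p-2}$, the differential inequality $-\Delta u+(2p-2)(\gamma u-\varphi)u\leq0$ is exactly the computation the paper performs inline (dropping the gradient term using $p\geq3/2$ and then bounding $\varphi\leq V$), and the conclusion via the maximum principle on $S$ with vanishing boundary data is identical. Your explicit treatment of the behavior at infinity is a minor point of extra care that the paper leaves implicit.
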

 \begin{proof}
   First we note that $V$ is harmonic outside the positions of the nuclei $R_1,...,R_K$. We set
   \begin{equation}
     \label{eq:f}
     f:=\gamma^{-1}V-\psi^{2p-2}.
   \end{equation}
   We wish to show that $f\geq0$ and therefore consider the exceptional set
   \begin{equation}
     \label{eq:S}
     S:=\left\{x\in\mathbb{R}^3|f(x)<0\right\}.
   \end{equation}
   Certainly neither of the $R_1,...,R_K$ is in $S$, since $\psi$ is finite everywhere. On $S$ we get from \eqref{eq:2}
   \begin{multline}
     \Delta f=-\Delta\psi^{2p-2}=-(2p-2)(2p-3)\psi^{2p-4}(\nabla\psi)^2-(2p-2)\psi^{2p-3}\Delta\psi\\
     \leq (2p-2)\psi^{2p-3}(\varphi-\gamma\psi^{2p-2})\psi\leq(2p-2)\psi^{2p-2} (V-\gamma\psi^{2p-2})<0      
   \end{multline}
   on $S$.  Hence $\Delta f<0$ on $S$, i.e., $f$ is superharmonic on
   $S$. Since $f$ vanishes on the boundary of $S$, the exceptional set
   $S$ is empty which implies the claim.
 \end{proof}

 For $R>0$ define $D_R$ as the complement of the union of the
balls of radius $R$ around the positions of the nuclei $R_1,...,R_K$,
i.e., $B_R(R_1)\cup...\cup B_R(R_K)$.

\begin{lemma}
  For $3/2<p<2$, $\gamma>0$ and $R>0$, let $\psi$ be the
  positive solution of (\ref{eq:2}) with $V$ given by (\ref{eq:V}),
  $\varphi$ by \eqref{eq:2}, and $S_{p,R}$ by
  \eqref{sommerfeldR}. Then for all $x\in D_R$,
\begin{equation}
\varphi(x) \le \frac{\pi^2}{R^2} + \sum_{k=1}^Ks_{p,R}(|x-R_k|).
\label{eq:z0}
\end{equation}

\end{lemma}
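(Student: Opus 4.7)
The plan is to show $\varphi(x) \le W(x)$ on $D_R$, where $W(x) := \pi^2/R^2 + \sum_{k=1}^K s_{p,R}(|x-R_k|)$, by a maximum-principle argument applied to the superlevel set
\begin{equation*}
S := \{\, x \in D_R : \varphi(x) > W(x)\,\}.
\end{equation*}
I would start by unpacking what must be built into \eqref{sommerfeldR}: $s_{p,R}$ is the radial Sommerfeld-type solution on $(R,\infty)$ of a nonlinear ODE of the schematic form $-\Delta s_{p,R} = -4\pi\bigl(s_{p,R}/(\lambda\gamma)\bigr)^{1/(p-1)}$, monotone decreasing, blowing up as $r\to R^+$, and decaying to $0$ as $r\to\infty$. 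These features guarantee, on the one hand, that $W\to+\infty$ at $\partial D_R$, and on the other hand, that $W\to \pi^2/R^2$ while (by Newton and $\psi\in L^2$) $\varphi\to 0$ at infinity. Consequently $\overline{S}$ is compact in the interior of $D_R$, and $\varphi - W \le 0$ on $\partial S$.

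Next, I would feed Lemma~3 into the Euler equation. Choosing a suitable $\lambda\in(0,1)$, Lemma~3 gives the pointwise inequality $\psi^{2p-2} \le (\lambda\gamma)^{-1}\bigl(\varphi + c_p(\lambda)\gamma^{1/(3-2p)}\bigr)$ throughout $\rz^3\setminus\{R_1,\dots,R_K\}$. Combining this with $-\Delta\varphi = -4\pi\psi^2$ away from the nuclei and the identity $\psi^2 = (\psi^{2p-2})^{1/(p-1)}$, I obtain on $D_R$ the differential inequality
\begin{equation*}
-\Delta\varphi \ge -4\pi \Bigl[ (\lambda\gamma)^{-1}\bigl(\varphi + c_p(\lambda)\gamma^{1/(3-2p)}\bigr)\Bigr]^{1/(p-1)}.
\end{equation*}
The role of $s_{p,R}$ is that $W$ satisfies the same kind of relation with equality, once the additive constant $c_p(\lambda)\gamma^{1/(3-2p)}$ is absorbed into the shift $\pi^2/R^2$.

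The comparison step then goes as follows: on $S$, strict positivity $\varphi > W \ge \pi^2/R^2$ holds, so monotonicity of $u\mapsto u^{1/(p-1)}$ lets me combine the two differential inequalities to conclude $-\Delta(\varphi - W) \le 0$, i.e.\ $\varphi - W$ is subharmonic on $S$. Since $\varphi - W \le 0$ on $\partial S$ by construction, the weak maximum principle forces $\varphi - W \le 0$ on $\overline{S}$, contradicting the strict inequality defining $S$ unless $S = \emptyset$.

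The main obstacle I expect is the algebraic calibration hidden inside \eqref{sommerfeldR}: one must pick $\lambda\in(0,1)$ and define $s_{p,R}$ so that the shift $\pi^2/R^2$ is exactly large enough to dominate $c_p(\lambda)\gamma^{1/(3-2p)}$ after passing through the nonlinearity $u\mapsto u^{1/(p-1)}$, while still allowing $s_{p,R}$ to diverge at $r=R$ and decay at infinity. The number $\pi^2/R^2$ is presumably pinned down by the first Dirichlet eigenvalue of $-\Delta$ on a ball of radius $R$, which is what makes the comparison $-\Delta(\varphi - W) \le 0$ on $S$ close up cleanly.
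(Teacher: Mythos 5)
Your comparison argument runs in the wrong direction, and this is the fatal gap. To force $S=\{\varphi>W\}$ to be empty you need $\varphi-W$ to be \emph{subharmonic} on $S$, i.e.\ $\Delta\varphi\geq\Delta W$ there; since $\Delta W$ is comparable to $4\pi f(W)$ with $f$ increasing, this requires a \emph{lower} bound on $\Delta\varphi=4\pi\psi^2$ in terms of $\varphi$, i.e.\ a lower bound on the density in terms of the potential. The inequality \eqref{eq:t1} supplies the opposite: $\lambda\gamma\psi^{2p-2}\leq\varphi+c_p(\lambda)\gamma^{1/(3-2p)}$ only yields $\Delta\varphi\leq 4\pi\bigl[(\varphi+c)/(\lambda\gamma)\bigr]^{1/(p-1)}$, which makes $\varphi$ a \emph{super}solution of the semilinear equation; on $S$ you then get $\Delta(\varphi-W)\leq 4\pi\bigl(f(\varphi)-f(W)\bigr)$, an upper bound, not the sign you need, so the maximum principle does not close. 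Separately, the calibration you yourself flag as ``the main obstacle'' genuinely fails: $s_{p,R}$ and the shift $\pi^2/R^2$ are fixed by \eqref{sommerfeldR} and by the statement of the lemma, the coefficient $a(p)$ contains no free parameter, and for large $R$ the shift $\pi^2/R^2\to0$ while $c_p(\lambda)\gamma^{1/(3-2p)}$ stays bounded away from zero, so the additive constant coming from \eqref{eq:t1} cannot be absorbed into $\pi^2/R^2$ for all $R>0$.

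The paper obtains the missing lower bound on the density by a different mechanism, which is also where $\pi^2/R^2$ actually enters (you correctly guessed it is the first Dirichlet eigenvalue of $B_R$, but your sketch never uses this). The operator $-\Delta+\gamma\rho^{p-1}-\varphi$ is nonnegative because $\psi$ is its zero-energy ground state; testing it with the Dirichlet ground state $e_{R,x}$ of $B_R(x)$ gives $\gamma\,(g_R*\rho^{p-1})(x)\geq(g_R*\varphi)(x)-\pi^2/R^2$ with $g_R=e_{R,0}^2$, and H\"older's inequality converts this into $\gamma\tilde\rho^{\,p-1}\geq\tilde\phi$ for the smeared quantities $\tilde\rho=g_R*\rho$ and $\tilde\phi=g_R*\varphi-\pi^2/R^2$. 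This makes $\tilde\phi$ a genuine subsolution of the generalized TF equation with smeared potential $\tilde V$, which is then compared (via the maximum principle) first with the TF potential $\hat\phi$ and then with the Sommerfeld supersolution $s_{p,R}$, which blows up at $r=R$; the sub-mean-value property of the subharmonic function $\varphi$ undoes the smearing and produces the shift $\pi^2/R^2$ in \eqref{eq:z0}. Without some substitute for this operator-positivity step, your approach cannot be completed.
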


\begin{proof}
  Let $W=\gamma \, \rho^{p-1} - \varphi$, $\rho=\psi^2$ and consider the
  Hamiltonian $H = -\Delta+ W$. The operator $H$ is non-negative, since
  its ground state, the function $\psi$, has zero energy. In particular
  \begin{equation}
    \int_{\mathbb{R}^3} \left(|\nabla e_{R,a}|^2+ W |e_{R,a}|^2\right)\geq0
\label{eq:z1}
\end{equation}
where $e_{R,a}$ is the positive normalized groundstate of the
Dirichlet Laplacian on $B_R(a)$, $a\in\rz^3$, extended by $0$ to the
complement of $B_R(a)$, i.e.,
\begin{equation}
e_{R,a}(x) =\begin{cases}
   {1\over R^\frac32 \sqrt{2 \pi }}{\sin\left(\pi |x-a|/R\right)\over|x-a|/R}& |x-a|<R\\
  0&|x-a|\geq R.
\end{cases}
\end{equation}
Obviously $e_{R,a}\in H^1(\rz^3)$, is spherically symmetric about $a$,
is decreasing, and has compact support.  With $g_{R,a}:=e_{R,a}^2$ and
$g_R:=g_{R,0}$ we have
\begin{equation}
  \int_{\mathbb{R}^3} |\nabla e_{R,x}|^2=
\left(\frac{\pi}{R}\right)^2.
\end{equation}
Thus, (\ref{eq:z1}) implies for all $R>0$ and all $x\in\rz^3$
\begin{equation}
\int_{\mathbb{R}^3} W(y) e_{R,x}(y)\, \rd y \ge -\left(\frac\pi R\right)^2.
\label{eq:z2}
\end{equation}
Note that
$\int_{\mathbb{R}^3} W(y) g_{R,x}(y)\, \rd y=\left(g_R \ast W \right)
(x)$ where $\ast$ denotes convolution. Define
\begin{equation}
  \tilde \phi =\varphi \ast g_R - \left(\frac\pi R\right)^2.
  \label{eq:z3}
\end{equation}
Since $\varphi \in L^{3+\epsilon} + L^{3-\epsilon}$, $\epsilon>0$
(\cite{Benguriaetal1981}, proof of Lemma 7) and $g_R \in L^s$ for all
$s \ge 1$, $\tilde \phi$ is continuous and tends to $-(\pi/R)^2$ at
infinity (see \cite{Lieb1981}, Lemma 3.1). Using H\"older's
inequality, we have for all $x$
\begin{equation} 
\left(g_R\ast\rho^{p-1}\right)(x) \le \left[\left( g_R\ast \rho\right)(x)\right]^{p-1} \, \left(\int g_R(y) \, \rd y \right)^{2-p}= \left[\left( g_R\ast \rho\right)(x)\right]^{p-1}.
\label{eq:z4}
\end{equation}
provided $1\le p \le 2$. Here we used $\int_{\rz^3} g_R=1$. Let us
also define
\begin{equation}
\tilde \rho = g_R\ast \rho.
\label{eq:z5}
\end{equation}
From equations (\ref{eq:z2})--(\ref{eq:z5}) we obtain for all $x$, 
\begin{equation}
  \left(\frac\pi R\right)^2 \ge \left(\varphi \ast g_R\right)(x)-\gamma \left(g_R\ast\rho^{p-1}\right)(x) \ge \tilde \phi(x) +\left(\frac\pi R\right)^2 - \gamma{\tilde \rho}(x)^{p-1}.
\label{eq:z6}
\end{equation}
In other words 
\begin{equation}
\tilde \phi \le \gamma {\tilde \rho \,}^{p-1}
\label{eq:z7}
\end{equation}
provided $1\le p \le 2$.  Notice that $\varphi$ is subharmonic away from
the nuclei and that $\tilde \varphi= g_R\ast \phi -(\pi/R)^2$
with $g_R$ being
spherically symmetric, positive, of total mass one, and having support
in a ball of radius $R$. From this it follows that
\begin{equation}
\varphi(x) \le \tilde \phi(x) +\left(\frac\pi R\right)^2, 
\label{eq:z8}
\end{equation}
for all $x$ such that $|x-R_k|>R$, for all $k$. To prove (\ref{eq:z0})
we need a bound on $\tilde \phi$. From (\ref{eq:V}) and (\ref{eq:z3}),
using the bound (\ref{eq:z7}), and the fact that the Laplacian
commutes with convolution, we compute
\begin{equation}
-\frac{1}{4 \pi} \Delta \tilde \phi = \tilde V - \tilde \rho \le  \tilde V - {\gamma}^{-1/(p-1)} \left[{\tilde\phi}_{+}(x)\right]^{1/(p-1)}
\label{eq:z9}
\end{equation}
with 
\begin{equation}
\tilde V =V*g_R
\label{eq:z10}
\end{equation}
and with
${\tilde\phi}_{+}(x) = {\rm max}\left(\tilde \phi(x),0 \right)$. Let
$\hat \phi$ be the minimizer of the Thomas-Fermi functional with
external potential $\tilde V$. It fuffills the equation
\begin{equation}
  -\frac{1}{4 \pi} \Delta \hat \phi =  \tilde V - \left[{\hat \phi}_{+}(x)\over \gamma\right]^{1/(p-1)}.
\label{eq:z11}
\end{equation}
By the maximum principle we have for all $x$
\begin{equation}
\tilde \phi(x) \le \hat \phi(x).
\label{eq:z12}
\end{equation}

The next step is to bound $\hat \phi$. We treat first the radial
case with $V(x)=Z/|x|$. Since the Sommerfeld solution $S_{p,R}$ of the generalized TF model
defined in \eqref{sommerfeldR} fulfills for $p\in(3/2,2)$
\begin{equation}
\frac{1}{4\pi} \Delta S_{p,R} \le {\gamma}^{-1/(p-1)}\, S_{p,R}^{1/(p-1)}
\label{eq:z16}
\end{equation}
for $r>R$ and $\hat \phi$ satisfies there
\begin{equation}
\frac{1}{4\pi} \Delta {\hat \phi} = {\gamma}^{-1/(p-1)}\, {\hat \phi}^{1/(p-1)}.
\label{eq:z17}
\end{equation}
we can again use a comparison argument. Since $s_{p,R}(R) - {\hat \phi}(R)=\infty$ we conclude that
\begin{equation}
{\hat \phi} (r) \le s_{p,R}(r) \qquad \mbox{for $r>R$}.
\label{eq:z18}
\end{equation}
This, together with (\ref{eq:z8}) and (\ref{eq:z12}) proves
(\ref{eq:z0}) in the radial case.

For the non radial case, let ${\hat \phi}_j(x)$ be the solution to
(\ref{eq:z11}) for an atom of smeared nuclear charge at $R_j$. By
another comparison argument (see Lieb and Simon \cite[Theorem
V.12]{LiebSimon1977} or Lieb \cite[Corollary 3.6]{Lieb1981}) we get
$$
{\hat \phi} (x) \le \sum_{j=1}^K {\hat \phi}_j(x).
$$
This, together with the definition of $S_{p,R}$ in \eqref{sommerfeldR} and
(\ref{eq:z8}) implies (\ref{eq:z0}).
\end{proof}

 \section{Bound on the excess charge uniform in the atomic number for
   exponents $p\in(3/2,2)$ \label{s3}}
 Next we turn to bounds on $Q$ that are uniform in the atomic
 number. That such bounds exist is the content of the excess charge
 conjecture. We will prove it for $p\in(3/2,2)$ following
 \cite{BenguriaLieb1985} and start with the atomic case where we get
 \begin{theorem}
   Assume $p\in(3/2,2)$ and $B(p)$ as given in \eqref{eq:B}. Then the
   atomic excess charge $Q$ of the generalized TFW functional is
   bounded as follows:
   \begin{equation}
     0\leq Q\leq B(p) {A^\frac{3p-4}{4p-6}\over \gamma^\frac1{4p-6}}.
   \end{equation}
 \end{theorem}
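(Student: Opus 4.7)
The plan is to follow the Benguria--Lieb strategy adapted to the generalized TFW setting, combining three ingredients already prepared in Section \ref{newsection}: the inequality $rP(r)\ge Q$ from subharmonicity of $P=\sqrt{4\pi\psi^2+\varphi^2}$, the Sommerfeld-type upper bound $\varphi(r)\le \pi^2/R^2+s_{p,R}(r)$, and the pointwise estimate $\lambda\gamma\psi^{2p-2}\le\varphi+c_p(\lambda)\gamma^{1/(3-2p)}$. Using the scaling \eqref{eq:scaling} I first reduce the theorem to the dimensionless claim $Q\le B(p)$ in the normalized setting $A=\gamma=1$; the prefactor $c_p=A^{(3p-4)/(4p-6)}/\gamma^{1/(4p-6)}$ is then reinstated at the end via $Q=c_p\tilde Q$, since $N=\int\psi^2$ scales by the same factor as $Z$.

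With $K=1$, $R_1=0$, the main estimate is obtained by evaluating $P$ at a radius $r>R$. The inequality $Q\le rP(r)$ means it suffices to bound $P(r)$ from above. The Sommerfeld bound controls the $\varphi^2$-summand directly in the regime $\varphi(r)\ge 0$, and is supplemented in the complementary regime by the auxiliary estimate $r|\varphi(r)|\le Z$ coming from Newton's theorem together with \eqref{eq:psi-V}. The $\psi^2$-summand is controlled by rewriting the pointwise lemma as
\[
  \psi(r)^2 \;\le\; \bigl((\varphi(r)+c_p(\lambda))/\lambda\bigr)^{1/(p-1)},
\]
supplemented, in the region $\varphi(r)<0$, by the uniform bound \eqref{eq:fi-negative}. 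Combining these yields an explicit inequality of the schematic form
\[
  Q^2 \;\le\; r^2\bigl(\pi^2/R^2+s_{p,R}(r)\bigr)^2 \;+\; 4\pi\, r^2 \bigl((\pi^2/R^2+s_{p,R}(r)+c_p(\lambda))/\lambda\bigr)^{1/(p-1)}
\]
valid for arbitrary $r>R>0$ and $\lambda\in(0,1)$.

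The constant $B(p)$ then emerges by minimizing the right-hand side over the three parameters $R$, $r$, $\lambda$. Substituting $r=tR$ with $t>1$ and exploiting the power-law scaling of the generalized Sommerfeld profile in $R$, the $R$-dependence of the whole right-hand side factors out as a single explicit power; one picks $R$ to balance the two scales in the bracket, reducing the problem to a one-variable minimization in $t$ at fixed $\lambda$, followed by a final minimization in $\lambda\in(0,1)$. The main obstacle is precisely this three-parameter optimization together with the fact that the Sommerfeld profile is characterized only implicitly by an ODE; however, only its asymptotic constant and monotonicity on $(R,\infty)$ are really needed, so after the scale balancing the argument collapses to an essentially one-dimensional optimization whose minimum provides the explicit, if unwieldy, form of $B(p)$.
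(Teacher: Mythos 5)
Your overall strategy coincides with the paper's: reduce to $A=\gamma=1$ by the scaling \eqref{eq:scaling} (and indeed $N$ and $Z$ both scale by $c_p=A^{(3p-4)/(4p-6)}\gamma^{-1/(4p-6)}$, so $Q$ does too), bound $Q\le rP(r)$ via \eqref{eq:rp}, control the $\psi^2$-summand through \eqref{eq:t1} and the $\varphi$-summand through the smeared Sommerfeld bound \eqref{eq:z0}, and then optimize over $\lambda$, $R$, $r$. Your displayed schematic inequality for the regime $\varphi(r)\ge0$ is exactly $Q\le F(p,\lambda,R,r)$ from the paper's definition of $B(p)$ in \eqref{eq:B}.

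The gap is in the complementary branch $\varphi(r)<0$. There you propose to control the $\varphi^2$-summand by $r|\varphi(r)|\le Z$ from Newton's theorem. That estimate is true, but it reintroduces the nuclear charge: it yields $Q^2\le Z^2+4\pi r^2(\cdots)$, hence at best $Q\lesssim Z$, whereas $B(p)$ must be a pure number independent of $Z$ --- uniformity in $Z$ is the entire content of the theorem. In fact the situation is worse: for a spherically symmetric $\rho$ Newton's theorem gives $\rho*|\cdot|^{-1}\le N/|x|$, so $r\varphi(r)\ge Z-N=-Q$, and in the regime $\varphi(r)<0$ your estimate sharpens to $r|\varphi(r)|\le Q$; the inequality $Q^2\le 4\pi r^2\psi(r)^2+r^2\varphi(r)^2$ then degenerates to $Q^2\le Q^2+(\mathrm{nonnegative})$ and carries no information. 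The paper instead retains the $Z$-independent majorant $\left(s_{p,R}+\pi^2/R^2\right)^2$ for the $\varphi^2$-term in this branch as well, and drops $\varphi$ only inside the $\psi$-estimate (via the uniform bound \eqref{eq:fi-negative}), which is what produces the second function $G(p,\lambda,R,r)$ entering the maximum in \eqref{eq:B}. A minor further point: since $\pi^2/R^2$, $s_{p,R}(tR)\propto R^{-\sigma}$ with $\sigma=2(p-1)/(2-p)>2$, and the constant $c_p(\lambda)$ all scale differently in $R$, the $R$-dependence of your right-hand side does not factor out as a single power, so the three-parameter minimization does not collapse to a one-dimensional one as you suggest; the paper simply evaluates it numerically.
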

 \begin{proof}
We recall that $\varphi= V- \rho*|\cdot|^{-1}$ with $\rho$ the minimizer
of $\cE_p$ is the electric mean-field potential of the generalized TFW
minimizer (see \eqref{eq:2}). Similarly we write
\begin{equation}
  \label{TFpot}
  \hat\phi=\tilde V-\rho^\mathrm{TF}*|\cdot|^{-1}
\end{equation}
where $\rho^\mathrm{TF}$ is the minimizer of $\cE_p^\mathrm{TF}$ with
external potential $\tilde V$.

We have the following simple bound in terms of $s_{p,R}$
\begin{equation}
  \begin{split}
  &Q\leq r \sqrt{4\pi \psi^2+\varphi^2}
  \leq r \sqrt{4\pi \left({\varphi+c_p(\lambda)\over\lambda}\right)^\frac1{p-1}+\varphi^2}\\
  \leq &r\sqrt{4\pi\left({\hat\phi+{\pi^2\over R^2}+c_p(\lambda)\over\lambda} \right)^{\frac1{p-1}}+\left(\hat\phi+{\pi^2\over R^2} \right)^2}\\
  \leq& 
  \begin{cases} F(p,\lambda,R,r):=r\sqrt{4\pi\left({s_{p,R}+{\pi^2\over
R^2}+c_p(\lambda)\over\lambda}
\right)^{\frac1{p-1}}+\left(s_{p,R}+{\pi^2\over R^2} \right)^2} & \varphi(r)\geq0\\
G(p,\lambda,R,r):= r\sqrt{4\pi\left({c_p(\lambda)\over\lambda}\right)^{\frac1{p-1}}
  +\left(s_{p,R}+{\pi^2\over R^2} \right)^2}& \varphi(r)<0.
\end{cases}
  \end{split}
\end{equation}
(Note that we could replace using Lemma \ref{simple-bound} by Lemma
\ref{better-bound} which would improve the numerical result. However,
to keep the numerical evaluation simple we refrain from doing so.)
Thus
\begin{equation}
  \label{eq:B}
  Q\leq B(p):= \max\left\{\min_{0<\lambda<1,\ 0<R<r}F(p,\lambda,R,r),\ \min_{0<\lambda<1,\ 0<R<r} G(p,\lambda,R,r)\right\}.
\end{equation}
The parameters $A$ and $\gamma$ are restored by the scaling relations \eqref{eq:scaling}.

\begin{figure}[h]
 \centering 
 \scalebox{0.5} 
 {\includegraphics{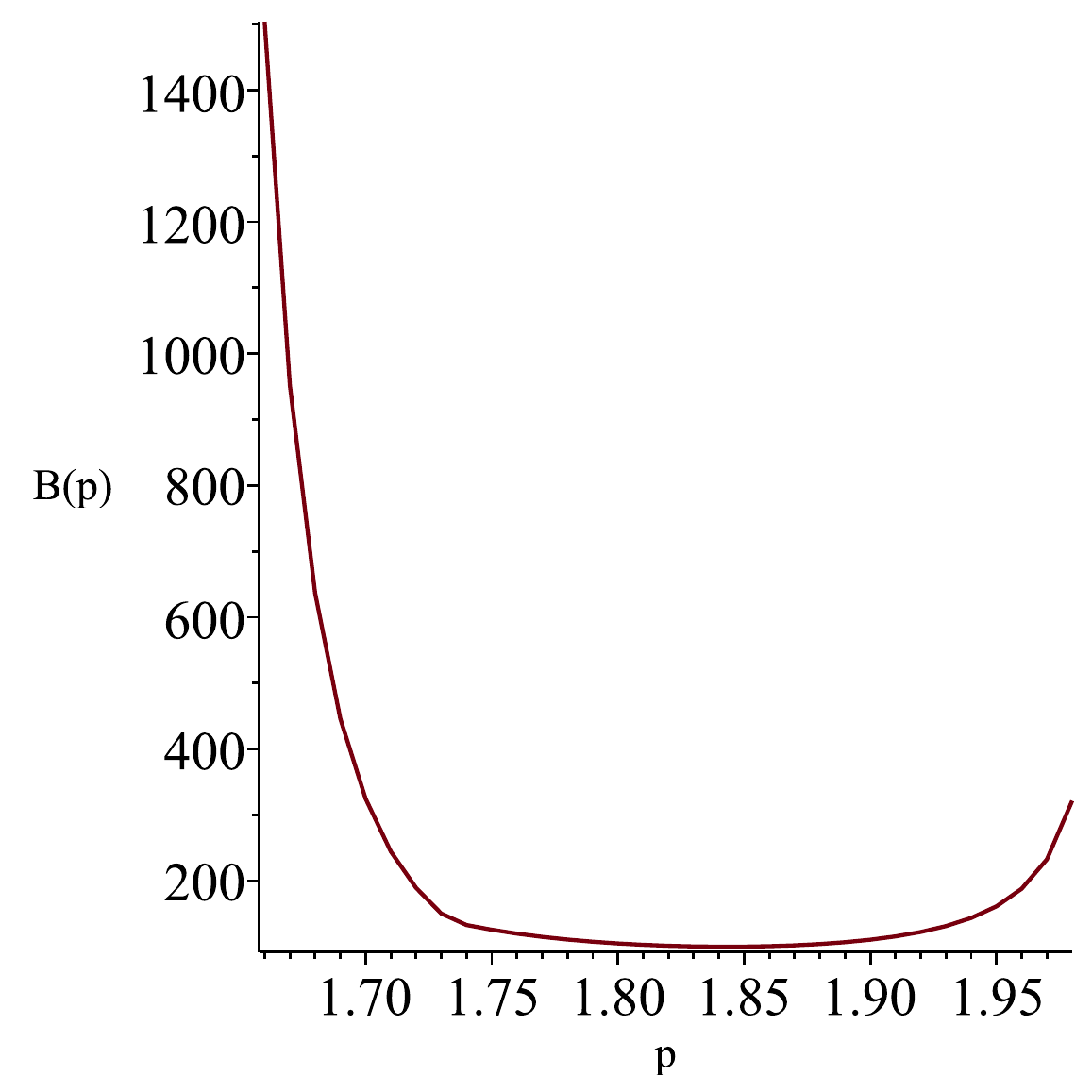}} 
 \caption{{\small Upper bound $B(p)$ on the excess charge. Minimum at $p_\mathrm{min} \approx  1.8431$ with $B(p_\mathrm{min})\approx 101.14$.}}\label{fig:exm}
\end{figure}
\end{proof}

The general case is more or less a corollary of the atomic
case (see \cite{BenguriaLieb1985}) and merely yields a factor $K$, i.e.,
\begin{equation}
  \label{eq:mol}
  Q\leq B(p) K.
\end{equation}

\section{The critical exponent $p=3/2$} \label{s4}

We turn to the critical exponent $p=3/2$. We will keep the parameter
$\gamma$ in $\cE_p$ in this section, since it cannot be scaled out in
this case. In fact our result will depend on $\gamma$. Lemma
\ref{lemmag1} allows us to prove the following bound.
\begin{theorem}
  \label{theorem2}
  Let $\psi$ be the positive solution of \eqref{eq:2} for $p=3/2$. Then
  \begin{equation}
    \label{eq:haupt2}
    Q\leq
    \begin{cases}
      0 &\gamma\geq\gamma_c:=4\sqrt\pi\\
      {\gamma_c-\gamma\over\gamma}Z& \gamma<\gamma_c.
    \end{cases}
  \end{equation}
\end{theorem}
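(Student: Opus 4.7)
The plan is to work in the atomic case ($K=1$, $R_1=0$) where the minimizer $\psi$ is spherically symmetric, and to split the argument according to whether $\gamma$ lies above or below the critical value $\gamma_c := 4\sqrt\pi$. The first step is to observe that, by Newton's theorem, the function $r\mapsto r\varphi(r)$ is strictly monotone decreasing from $Z$ at the origin to $-Q$ at infinity: a direct differentiation yields $\frac{\rd}{\rd r}(r\varphi(r))=-\int_r^\infty 4\pi s\,\psi(s)^2\,\rd s\leq 0$. Consequently, if $Q>0$ there exists a unique $r_0\in(0,\infty)$ with $\varphi(r_0)=0$; specialising the subharmonic bound $rP(r)\geq Q$ at $r_0$ then yields $2\sqrt\pi\,r_0\,\psi(r_0)\geq Q$.

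The heart of the proof is a pointwise inequality for $\psi$ obtained by a subharmonic–maximum-principle argument, mirroring the refinement of Lemma~\ref{lemmag1} stated for $3/2<p<2$ (which degenerates exactly at $p=3/2$). With $v := \gamma\lambda\psi-\varphi-d$ for parameters $\lambda\in(0,1)$ and $d\geq 0$, I would use the Euler equation $\Delta\psi=\psi(\gamma\psi-\varphi)$ together with $\Delta\varphi=4\pi\psi^2$ (away from the nucleus) to compute, on the open set $\{v>0\}$,
\[
 -\Delta v \;\leq\; \bigl[4\pi - \gamma^2\lambda(1-\lambda)\bigr]\,\psi^2 \;-\; \gamma\lambda\,d\,\psi .
\]
The critical constant $\gamma_c=4\sqrt\pi$ appears because $\sup_{\lambda\in(0,1)}\gamma^2\lambda(1-\lambda)=\gamma^2/4$, which equals $4\pi$ precisely when $\gamma=4\sqrt\pi$.

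In the regime $\gamma\geq\gamma_c$ I would take $\lambda=1/2$ and $d=0$; the bracket $4\pi-\gamma^2/4$ is then non-positive, so $-\Delta v\leq 0$ holds unconditionally on $\{v>0\}$. Since $v\to-\infty$ near the nucleus (where $\varphi\to+\infty$), $v\to 0$ at infinity, and $v$ vanishes on $\partial\{v>0\}$, the maximum principle forces $\{v>0\}=\emptyset$, i.e., $\gamma\psi/2\leq\varphi$ pointwise. Evaluating at $r_0$ gives $\psi(r_0)\leq 0$, hence $\psi(r_0)=0$; the strong maximum principle applied to a positive solution of the Euler equation rules out such an interior zero. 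Thus no $r_0$ can exist, which forces $Q\leq 0$, and together with the already-known bound $Q\geq 0$ one concludes $Q=0$.

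For the subcritical regime $\gamma<\gamma_c$ I would retain $\lambda=1/2$ but take $d>0$ and invoke Lemma~\ref{lemmag1} ($\psi\leq V/\gamma=Z/(\gamma r)$) to tame the now-positive bracket: substituting $\psi\leq V/\gamma$ converts the subharmonic condition on $\{v>0\}$ into the localisation $V\leq 2\gamma^2 d/(\gamma_c^2-\gamma^2)$, equivalently $r\geq Z(\gamma_c^2-\gamma^2)/(2\gamma^2 d)$. Complementarily, on $\{v>0\}\cap\{\varphi\geq 0\}$ one has $\gamma\psi/2>d$, which via Lemma~\ref{lemmag1} forces $r<Z/(2d)$; choosing $d$ so that these two localisations are compatible confines $\{v>0\}$ to the subharmonic region, and the maximum principle then yields the global bound $\gamma\psi/2\leq\varphi+d$. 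Plugging $\psi(r_0)\leq 2d/\gamma$ into $2\sqrt\pi\,r_0\,\psi(r_0)\geq Q$ and combining with $r_0\,\psi(r_0)\leq Z/\gamma$, optimising in $d$ should deliver $Q\leq(\gamma_c-\gamma)\,Z/\gamma$. The main obstacle is precisely this subcritical case: the "bad" coefficient $4\pi-\gamma^2/4>0$ has to be absorbed by the $d\,\psi$ term, and one must pick $d$ so that the maximum principle still applies over the entirety of $\{v>0\}$.
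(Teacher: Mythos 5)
Your supercritical case ($\gamma\geq\gamma_c$) is correct and is essentially the paper's argument: with $\lambda=\tfrac12$, $d=0$ your $v$ is the negative of the paper's comparison function $g=\varphi-\tfrac\gamma2\psi$, and the maximum principle gives $\varphi\geq\tfrac\gamma2\psi\geq0$, hence $Q=-\lim_{|x|\to\infty}|x|\varphi(x)\leq0$.

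The subcritical case, however, has a genuine gap, and it is exactly at the point you flag as ``the main obstacle.'' On $\{v>0\}$ you have
$-\Delta v\leq[4\pi-\tfrac{\gamma^2}{4}]\psi^2-\tfrac{\gamma d}{2}\psi$, which is $\leq0$ only where $\psi\leq 2\gamma d/(\gamma_c^2-\gamma^2)$, i.e.\ where $\psi$ is \emph{small}. But your own localisation shows that on $\{v>0\}\cap\{\varphi\geq0\}$ one has $\psi>2d/\gamma$, i.e.\ $\psi$ is \emph{large} there (and for $\gamma\leq\gamma_c/\sqrt2$ the two constraints are outright incompatible); on $\{v>0\}\cap\{\varphi<0\}$ you have no upper bound on $\psi$ at all. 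A constant $d$ contributes only a term linear in $\psi$ and cannot absorb the quadratic defect $[4\pi-\gamma^2/4]\psi^2$, so the maximum principle does not close. Moreover, even granting $\tfrac\gamma2\psi\leq\varphi+d$ globally, your concluding step fails: $Q\leq2\sqrt\pi\,r_0\psi(r_0)$ together with $\psi(r_0)\leq 2d/\gamma$ requires an upper bound on $r_0$ that you do not have, and the only $d$-independent conclusion available, $Q\leq 2\sqrt\pi Z/\gamma=\gamma_cZ/(2\gamma)$, does not tend to $0$ as $\gamma\nearrow\gamma_c$ and so cannot deliver $(\gamma_c-\gamma)Z/\gamma$. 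The paper's fix is to replace your constant $d$ by the \emph{harmonic} function $aV$: setting $g=\varphi+aV-b\psi$, on $S=\{g<0\}$ Lemma \ref{lemmag1} gives $V\geq\gamma\psi$, so $\varphi<b\psi-aV\leq(b-a\gamma)\psi$, and the extra term is now $-ab\gamma\psi^2$, quadratic in $\psi$; completing the square in $b$ yields $\Delta g\leq\tfrac{\psi^2}{4}\bigl(\gamma_c^2-\gamma^2(1+a)^2\bigr)$, which vanishes for $a=(\gamma_c-\gamma)/\gamma$, $b=\gamma_c/2$. Then $S=\emptyset$ gives $\varphi\geq\tfrac{\gamma_c}2\psi-aV$, and since $|x|\psi(x)\to0$, $Q=-\lim|x|\varphi(x)\leq aZ$, which is the claimed bound. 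You should adopt this replacement of $d$ by $aV$; the rest of your structure then goes through.
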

\begin{proof}
  We set
  \begin{equation}
    \label{eq:8a}
    g:=\varphi+aV-b\psi
  \end{equation}
  with non-negative constants $a$ and $b$ to be suitably specified
  later. We want to show that $g$ is nonnegative on all of
  $\mathbb{R}^3$. Again, we use a subharmonic argument and define an
  exceptional set
  \begin{equation}
    \label{eq:S1}
    S:=\left\{x\in\mathbb{R}^3|g(x)<0\right\}.
  \end{equation}
  As in the proof of Lemma \ref{lemmag1}, the positions of the nuclei $R_1,...,R_K\notin S$. On $S$ we have
  \begin{equation}
    \label{eq:9}
    \Delta g =4\pi\psi^2-b\Delta\psi = 4\pi\psi^2+b\varphi\psi-b\gamma\psi^2.
  \end{equation}
Because of Lemma \ref{lemmag1} we have on $S$
\begin{equation}
  \varphi<b\psi-aV\leq b\psi-a\gamma\psi.
\end{equation}
Combining this with \eqref{eq:9} yields
\begin{equation}
  \begin{split}
  &\Delta g \leq 4\pi\psi^2+b^2\psi^2-ba\gamma\psi^2-b\gamma\psi^2\\
  =&\psi^2\left[4\pi+\left(b-\frac\gamma2(1+a)\right)^2-{\gamma^2\over4}(1+a)^2\right]\\
  = &\psi^2(4\pi-{\gamma^2\over4}(1+a)^2) ={\psi^2\over4}(\gamma_c^2-{\gamma^2\over4}(1+a)^2)
  \end{split}
\end{equation}
where we completed the square in $b$ and picked $b=\gamma(1+a)/2$.

We will make different choices of $a$ depending whether $\gamma\geq\gamma_c$
or $\gamma<\gamma_c$:

1. $\gamma\geq\gamma_c$: We chose $a=0$ which implies that $\Delta g\leq0$, i.e, $g$ is superharmonic, on $S$. Since $g$ vanishes $\partial S$, $g\geq0$ on $S$, eventually implying $S=\emptyset$. Thus,
\begin{equation}
  \varphi\geq b\psi
\end{equation}
and therefore
\begin{equation}
  \label{eq:limit}
  Q=N-Z = -\lim_{x\to\infty}|x|\varphi(x)\leq0
\end{equation}
yielding the first claim.

2. $0\leq\gamma<\gamma_c$ we pick $a:=(\gamma_c-\gamma)/\gamma>0$ and carry out the same subharmonic argument as before yielding
\begin{equation}
  \label{eq:14a}
  \varphi\geq b\psi-{\gamma_c-\gamma\over\gamma}V.
\end{equation}
 Since $\psi(x)|x|\to0$ as $x\to\infty$ by Lemma \ref{abfall}, we have
\begin{equation}
  \label{eq:fi2}
  Q=-\lim_{x\to\infty}|x|\varphi(x) \leq {\gamma_c-\gamma\over\gamma}Z
\end{equation}
which proves the claimed inequality for $\gamma<\gamma_c$.
\end{proof}

{\sc Acknowledgements:} This work has been supported by the Deutsche
Forschungsgemeinschaft (DFG), grant EXC-2111-390814868 and DFG TRR
352, grant 470903074, and FONDECYT (Chile) Project \# 124--1863.  RB
thanks the Mathematics Institute of the Ludwig--Maximilians University
and HS thanks the Physics Institute of the Pontificia Universidad
Cat\'olica de Chile for their hospitality.

\appendix
\section{Minimizers of generalized TF(W) functionals and associated
  Sommerfeld Formulae}
We collect a few known facts for $\cE_p$ mostly
from Benguria et al \cite{Benguriaetal1981}) and generalize the
Sommerfeld formula with a remainder term from $p=5/3$ (see Solovej
\cite[Lemma 4.4]{Solovej2003}) to $p\in(4/3,2)$.

\subsection {No sign change of minimizers}
Since $T[\psi]\geq T[|\psi|]$ with equality, if and only if $\psi$ does
not change the sign, and since all other terms of $\cE_p$ are
unchanged under the substitution $\psi\to|\psi|$ any minimizer of the
functional is either non-negative or non-positive.
\subsection{Existence of TFW minimizers  and strict positivity or negativity}
By standard compactness methods one shows that $\cE$ has a minimizer
$\psi\in H^1(\rz^3:\rz)$.   By unique continuation this implies that
any minimizer of the functional is strictly positive or strictly
negative.  Conversely any positive solution of \eqref{eq:2} minimizes
$\cE$.
\subsection{Uniqueness of the minimizer and spherical symmetry for atoms}
\label{uniqueness}
Since $\cE_p(\psi)=\cE_p(-\psi)$ it is enough to
study on nonnegative functions. However, the functional restricted to
nonnegative functions can be written in terms of the density $\rho$
with $\sqrt\rho:=\psi$ which makes it strictly convex. This implies
uniqueness of the minimizer $\psi$ among the non-negative functions
(and also among the negative functions). Thus $\cE_p$ has
exactly two minimizers (and \eqref{eq:2} has exactly two non-vanishing
solutions) in $H^1(\rz^3:\rz)$, one positive and one negative. --
Moreover in the atomic case, i.e., $K=1$ and $R_1=0$, uniqueness
implies that the minimizers are spherically symmetric.

\subsection{Decay of the minimizers}
\begin{lemma}
  \label{abfall}
  Suppose $p\in[1,3]$ and $\psi\in H^1(\rz^3:\rz)$ is a nonnegative
  minimizer of $\cE_p$ for the atomic case. Then
  $\psi(x)|x|\to0$ as $x\to\infty$.
\end{lemma}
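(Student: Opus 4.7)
The approach will be to reduce the claim to a Strauss-type tail bound for radial $H^1$ functions; the minimizer hypothesis enters only to supply spherical symmetry (in the atomic case, via subsection \ref{uniqueness}) and a continuous representative (via elliptic regularity for \eqref{eq:2}). I will therefore write $\psi(x)=\psi(r)$ with $r=|x|$, and note that
\[
\int_0^\infty t^2\psi(t)^2\,\rd t = \tfrac{1}{4\pi}\|\psi\|_{L^2(\rz^3)}^2 \quad\text{and}\quad \int_0^\infty t^2\psi'(t)^2\,\rd t = \tfrac{1}{4\pi}\|\nabla\psi\|_{L^2(\rz^3)}^2
\]
are both finite.

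The core step is a tail estimate that I would first establish for a smooth compactly supported radial test function $\tilde\psi$. Integrating $(t^2\tilde\psi^2)' = 2t\tilde\psi^2 + 2t^2\tilde\psi\tilde\psi'$ from $r$ to $\infty$ and noting that the boundary term at infinity vanishes yields
\[
r^2\tilde\psi(r)^2 + 2\int_r^\infty t\tilde\psi(t)^2\,\rd t = -2\int_r^\infty t^2\tilde\psi(t)\tilde\psi'(t)\,\rd t.
\]
Since both terms on the left are non-negative and Cauchy--Schwarz bounds the right side in absolute value, one obtains
\[
r^2\tilde\psi(r)^2 \leq 2\left(\int_r^\infty t^2\tilde\psi(t)^2\,\rd t\right)^{1/2}\left(\int_r^\infty t^2\tilde\psi'(t)^2\,\rd t\right)^{1/2}.
\]

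I would then extend this bound to $\psi$ itself via density of radial $C_c^\infty(\rz^3)$ in $H^1_{\mathrm{rad}}(\rz^3)$, selecting the continuous representative of $\psi$. Finally, letting $r\to\infty$, the two tail integrals on the right vanish as tails of convergent integrals, so $r^2\psi(r)^2\to 0$, i.e. $r\psi(r)\to0$, which is the assertion.

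The main technical subtlety is the density/representative step: one must verify that the pointwise bound derived for smooth compactly supported radial test functions passes to the continuous representative of $\psi$ in the $H^1$-limit. This is essentially the content of the Strauss radial embedding and is standard, but it is the only nontrivial ingredient and must be invoked explicitly; everything else is an elementary integration-by-parts and Cauchy--Schwarz argument.
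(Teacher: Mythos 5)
Your proof is correct, but it takes a genuinely different route from the paper's. Both arguments first reduce to the radial case via uniqueness of the minimizer, but from there they diverge. The paper invokes Lieb's rearrangement result to conclude that $\psi$ is not only spherically symmetric but also \emph{monotone decreasing}, and then argues by contradiction: if $r_n\psi(r_n)\geq B/2$ along a sequence $r_n\to\infty$, the spherical shells between consecutive radii each contribute at least a fixed multiple of $r_n-r_{n-1}$ to $\int\psi^2$, forcing $\|\psi\|_2=\infty$. That argument uses only $\psi\in L^2$ plus monotonicity. You instead use the full $H^1$ information through the Strauss-type tail estimate
\begin{equation*}
  r^2\psi(r)^2 \leq 2\left(\int_r^\infty t^2\psi(t)^2\,\rd t\right)^{1/2}\left(\int_r^\infty t^2\psi'(t)^2\,\rd t\right)^{1/2},
\end{equation*}
which needs no monotonicity at all and even produces a quantitative decay rate and a continuous representative for free (applying the estimate to differences of approximating test functions shows $r\psi_n(r)$ is uniformly Cauchy, so the representative issue you flag is handled by exactly the mechanism you describe). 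The trade-off is clear: your route requires $\nabla\psi\in L^2$ (available here since $\psi\in H^1$) and the standard density-of-radial-test-functions step, while the paper's route requires the rearrangement/monotonicity input but is otherwise purely an $L^2$ counting argument. Both are complete proofs of the stated claim.
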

\begin{proof}
  Lieb \cite[Theorem 2.12]{Lieb1981} shows that the atomic
  Thomas-Fermi energy decreases under spherically symmetric
  rearrangement. However, this is also true of $\rho^\frac53$ is
  replaced by $\rho^p$, since all $L^p$-norms are invariant under
  spherical symmetric rearrangements and it also holds if
  $\int |\nabla\sqrt\rho|^2$ is added. Thus, by uniqueness $\psi$ is
  spherically symmetric and decreasing. -- In abuse of notation, we
  will write also $\psi(r)$ instead of $\psi(x)$ with $r:=|x|$ in the
  remainder of this proof.

  Now, suppose the claim would not be true. Then there exists a
  sequence $r_1,r_2,...$ such $r_n\to\infty$ as $n\to\infty$ and
  $B:=\lim_{n\to\infty}\psi(r_n)r_n>0$. Thus there  exists $n_0$
  such that for all $n\geq n_0$ we have $\psi(r_n)r_n\geq B/2$. Thus
  we have
  \begin{equation}
    \begin{split}
      &\infty>\int_{\rz^3}|\psi|^2 \geq \sum_{n=n_0}^\infty
      {4\pi\over3} \left(r_n^3-r_{n-1}^3\right)\psi(r_n)^2 \geq \pi
      B^2\sum_{n=n_0}^\infty
      {r_n^3-r_{n-1}^3\over r_n^2}\\
      \geq &\pi
      B^2\sum_{n=n_0}^\infty(r_n-r_{n-1}){r_n^2+r_nr_{n-1}+r_{n-1}^2\over
        r_n^2} =\infty
    \end{split}
  \end{equation}
  where the first inequality uses that $\psi$ is monotone decreasing.
  However, this is a contradiction.
  \end{proof}

\subsection{The excess charge}
We know that $Q\in(-Z,\infty)$ for all $p\geq1$. If $p\geq 4/3$, then
$Q\in[Z,\infty)$, and if $p\geq 5/3$, i.e., the power of the classical
Thomas-Fermi-Weizs\"acker functional, then $Q>0$ (Benguria et
al. \cite[Theorem 1]{Benguriaetal1981}.

Moreover,  for all $\gamma \ge 0$ and $p>1$,
\begin{equation}
  Q < Z,
  \label{eq:s4}
\end{equation}
(see Lieb \cite{Lieb1981},  Theorem 7.23).

\subsection{Virial theorems}

There are two virial theorems that relate $K$, $F$, $A$, and
$R$. Assume $\psi$ to be a minimizer of $\cE_p$. Then
\begin{equation}
T[\psi]+p\, F[\psi]- A[\psi]+2 R[\psi]=0,
\label{eq:s8}
\end{equation}
and
\begin{equation}
T[\psi]+3 F[\psi]-2 A[\psi]+5 R[\psi]=0.
\label{eq:s9}
\end{equation}
Multiplying (\ref{eq:s8}) by $5$, (\ref{eq:s9}) by $2$, and
subtracting the results we obtain,
\begin{equation}
0=3T[\psi]+ (5p-6)  F[\psi]- A[\psi].
\label{eq:s10}
\end{equation}
and, if $p\geq 6/5$,
\begin{equation}
  \label{eq:s11}
  3T[\psi]\leq A[\psi].
\end{equation}

To prove \eqref{eq:s8} set $f(t):=\cE_p(t\psi)$. Since $f$
has a minimum at $t=1$, we have $f'(1)=0$. However, the left side of
\eqref{eq:s8} is simply $f'(1)/2$.

To prove \eqref{eq:s9} introduce $g(t):= \cE_p(\psi_t)$ with
$\psi_t(x):=\psi(x/t)$. Again, $g$ has a minimum for $t=1$. Thus
\begin{equation}
  0= g'(1)= T[\psi]+3F[\psi]-2A[\psi]+5R[\psi]
\end{equation}
which proves \eqref{eq:s9}.

\subsection{Basics on existence, uniqueness, and excess charge in
  generalized TF theory}
In this section we are interested in the asymptotic behavior of the minimizing
density of Thomas-Fermi type functionals, however, with a more general
power $p$ and also a more general external potential 
\begin{equation}
  \label{eq:vg}
  V:= |\cdot|^{-1}*\mu
\end{equation}
with $\mu\in M$ where
\begin{definition}
  \label{eq:M}
  $M$ is the set of measures such that $\mathrm{supp}( \mu)$ compact, and
  $\mu(x)=\sum_{k=1}^KZ_k\delta(x-R_k)\rd x+\sigma(x)\rd x$ with
  $Z_1,...,Z_k\geq 0$, $ X_1,...,X_k\in\rz^3$, and
  $\ D[\sigma]<\infty$.
\end{definition}
 Note that the
molecular case as defined in \eqref{eq:V} is recovered by the choice
$\sigma=0$. The functional is defined as
\begin{equation}
  \label{eq:tfdomain}
  \gD_p^\mathrm{TF}:=\{\rho\in L^p(\rz^3)|\rho\geq0,\
  D[\rho]:=R[\sqrt\rho]<\infty\}.
\end{equation}

The generalized Thomas-Fermi functional is
\begin{equation}
  \label{eq:51}
  \begin{split}
    &\cE_{p}^\mathrm{TF}: \gD_p^\mathrm{TF} \to \rz\\
    &\rho \mapsto
     \tfrac1p\int_{\rz^3}\rho^{p} -\int_{\rz^3}V\rho + D[\rho].
  \end{split}
\end{equation}
Since in the atomic and molecular case the functional is unbounded
from below for $p\leq 3/2$, we are mainly interested in
$p>3/2$. Before we continue, note that $D(\cdot,\cdot)$, the
sesquilinear form associated with the quadratic form $D[\cdot]$, is a
scalar product on the set of all tempered distributions $\mu$ for
which $\int_{\rz^3}\rd \xi |\cF(\mu)(\xi)|^2/|\xi|^2<\infty$.  In the
following we will use $\eta:= \delta(|x|-R)/(4\pi R^2)$. We also write $p':=p/(p-1)$ for the dual power of $p$. Then
\begin{lemma}
  \label{lemma6}
  Assume $p>3/2$, $\mu\in M$, and $V=\mu*|\cdot|^{-1}$. Then
  $\cE_{p}^\mathrm{TF}$ is bounded from below and coercive in the
  $p$-norm and the Coulomb norm. More precisely
  \begin{equation}
    \cE_{p}^\mathrm{TF}(\rho)\geq \frac\gamma p \|\rho\|_p^p -{4\pi Z\over3-p'}\|\rho_p\| - 2 Z \sqrt{D[\rho]} + D[\rho].
  \end{equation}
\end{lemma}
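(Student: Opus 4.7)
My plan is to estimate each of the three terms in
\[
\cE_p^{\mathrm{TF}}(\rho)=\tfrac1p\int\rho^p-\int V\rho+D[\rho],
\]
leaving the first and third untouched and bounding the potential term from above by an expression linear in $\|\rho\|_p$ and in $\sqrt{D[\rho]}$, so that the full expression is coercive in both norms.

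First, I would split the external potential at a fixed scale $R>0$. Writing $V=\mu\ast|\cdot|^{-1}$ and $\mu=\sum_kZ_k\delta_{R_k}+\sigma$, one decomposes $|x-R_k|^{-1}=|x-R_k|^{-1}\mathbf{1}_{B_R(R_k)}+|x-R_k|^{-1}\mathbf{1}_{B_R(R_k)^c}$. On the short-range piece I would apply H\"older with conjugate exponents $p$ and $p'=p/(p-1)$; the hypothesis $p>3/2$ yields $p'<3$, so $|x|^{-1}\mathbf{1}_{B_R}\in L^{p'}$ with norm computable from $\int_0^Rr^{2-p'}\rd r=R^{3-p'}/(3-p')$. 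This gives a bound proportional to $Z\|\rho\|_p$ with a constant explicitly involving $4\pi/(3-p')$.

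Second, on the long-range piece I would invoke Newton's theorem: $|x-R_k|^{-1}\mathbf{1}_{B_R(R_k)^c}$ is exactly the Coulomb potential of the unit sphere $\eta=\delta(|\cdot-R_k|-R)/(4\pi R^2)$ on its support and is dominated by $\eta\ast|\cdot|^{-1}$ everywhere. Cauchy--Schwarz in the Coulomb inner product $D(\cdot,\cdot)$ then yields
\[
\int\rho(\eta\ast|\cdot|^{-1})=2D(\rho,\eta)\leq2\sqrt{D[\rho]}\sqrt{D[\eta]},
\]
and the self-energy of the unit shell is the elementary $D[\eta]=1/(2R)$. The continuous part $\sigma$ is handled directly by Cauchy--Schwarz, $2D(\sigma,\rho)\leq2\sqrt{D[\sigma]}\sqrt{D[\rho]}$, which is absorbed into the $\sqrt{D[\rho]}$ coefficient since $D[\sigma]<\infty$ by assumption on $\mu\in M$.

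Third, summing over $k$ and choosing $R$ to collapse the remaining $R$-dependent constants produces the estimate $\int V\rho\leq\tfrac{4\pi Z}{3-p'}\|\rho\|_p+2Z\sqrt{D[\rho]}$, which when subtracted from $\tfrac1p\|\rho\|_p^p+D[\rho]$ gives the claimed lower bound. Coercivity in the $L^p$-norm follows because $\tfrac1p\|\rho\|_p^p$ dominates any linear expression in $\|\rho\|_p$, and coercivity in the Coulomb norm follows because $D[\rho]$ dominates $2Z\sqrt{D[\rho]}$ (the quadratic in $\sqrt{D[\rho]}$ is bounded below by $-Z^2$).

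The main technical obstacle is the constant book-keeping: both the H\"older estimate and the Cauchy--Schwarz estimate carry explicit $R$-dependent factors, and one must verify that the tuning of $R$ produces precisely the clean prefactors $4\pi Z/(3-p')$ and $2Z$ stated, rather than leaving residual scale-dependence. Once this tuning is done, the inequality combines the three building blocks mechanically.
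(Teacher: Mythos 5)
Your proposal is correct and follows essentially the same route as the paper: split the Coulomb singularity at a scale $R$, bound the short-range part by H\"older against $\|\rho\|_p$ (using $p'<3$ for $p>3/2$), and bound the long-range part via Newton's theorem and Cauchy--Schwarz in the Coulomb inner product with the normalized sphere measure $\eta$. The only cosmetic difference is that the paper takes $V_<=|\cdot|^{-1}-\eta*|\cdot|^{-1}$ rather than the sharp cutoff $|\cdot|^{-1}\mathbf{1}_{B_R}$, which changes nothing of substance.
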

\begin{proof}
  Write $V_>:= \eta*|\cdot|^{-1}$ and $V_<:=|\cdot|^{-1}-V_>$.  By
  H\"older's and Schwarz's inequality we have
  \begin{multline}
    \int_{\rz^3}\rd x V(x)\rho(x) = \sum_{k=1}^K Z_k\int_{\rz^3}\rd x \left(V_<(x-R_k)\rho(x)
      + V_>(x-R_k)\rho(x)\right) +2D(\rho,\sigma) \\
    \leq \sum_{k=1}^KZ_k\left( \|V_<\|_{p'}\|\rho\|_p + 2Z_k D(\eta,\rho)\right)
    +2D(\sigma,\rho)
  \end{multline}
  and thus by the Schwarz inequality
  \begin{equation}
    \cE_p^\mathrm{TF}(\rho) \geq\frac\gamma p \|\rho\|_p^p
    -Z\|V_<\|_{p'} \|\rho\|_p - Z\sqrt{D[\eta]D[\rho]} - \sqrt{D[\rho]D[\sigma]}+ D[\rho]
    \end{equation}
    from which the claim follows, since $0\leq V_<(x)\leq1/|x|$, and
    therefore in $L^{p'}$.
\end{proof}

The Euler equation for the minimizer $\rho^\mathrm{TF}$ of
\eqref{eq:51} -- generalizing the classical Thomas-Fermi equation --
reads
\begin{equation}
  \label{TFp}
  \rho^{p-1}= \phi_+,\ \phi=V-\rho*|\cdot|^{-1}.
\end{equation}

\begin{lemma}
  Assume $p>3/2$. Then $\cE_p^\mathrm{TF}$ has a unique
  minimizer $\rho^\mathrm{TF}$ in $\gD_p$. The minimizer fulfills
  \eqref{TFp}. Moreover, any solution $\rho^\mathrm{TF}\in \gD_p$ of
  \eqref{TFp} minimizes $\cE_p^\mathrm{TF}$.
\end{lemma}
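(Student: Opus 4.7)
The plan is to argue existence via the direct method, uniqueness via strict convexity, and the equivalence with the Euler equation via a variational characterization on the convex admissible set $\gD_p^\mathrm{TF}$.

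\emph{Existence.} I take a minimizing sequence $(\rho_n)\subset\gD_p^\mathrm{TF}$. By Lemma~\ref{lemma6}, $\cE_p^\mathrm{TF}$ is coercive in both $\|\cdot\|_p$ and the Coulomb norm $\sqrt{D[\,\cdot\,]}$, so $(\rho_n)$ is bounded in $L^p(\rz^3)$ and in the Hilbert space $\cF$ of measures with finite Coulomb energy (with scalar product $D(\cdot,\cdot)$). Passing to a subsequence, $\rho_n\rightharpoonup\rho^\mathrm{TF}$ weakly in $L^p$ and in $\cF$. Since the non-negative cone is weakly closed in $L^p$, $\rho^\mathrm{TF}\geq 0$, and since $D[\,\cdot\,]$ is the square of a Hilbert norm and $\rho\mapsto\tfrac1p\int\rho^p$ is convex, both terms are weakly lower semicontinuous. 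For the external-potential term, I split $V=V_<+V_>$ as in the proof of Lemma~\ref{lemma6}: $V_<\in L^{p'}(\rz^3)$ (since $p>3/2$ gives $p'<3$, and the $1/|x-R_k|$ singularities are locally $L^{p'}$), so $\int V_<\rho_n\to \int V_<\rho^\mathrm{TF}$ by weak $L^p$-convergence, while $\int V_>\rho_n=2D(\eta*|\cdot|^{-1},\rho_n)+2D(\sigma,\rho_n)$ converges by weak convergence in $\cF$. Thus $\cE_p^\mathrm{TF}(\rho^\mathrm{TF})\leq\liminf_n\cE_p^\mathrm{TF}(\rho_n)=\inf\cE_p^\mathrm{TF}$, so $\rho^\mathrm{TF}$ is a minimizer.

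\emph{Uniqueness.} The domain $\gD_p^\mathrm{TF}$ is convex. The map $\rho\mapsto\tfrac1p\int\rho^p$ is strictly convex on the non-negative cone for $p>1$, the map $\rho\mapsto D[\rho]$ is convex (and even strictly convex on $\cF$), and $\rho\mapsto-\int V\rho$ is linear. Hence $\cE_p^\mathrm{TF}$ is strictly convex, forcing uniqueness of the minimizer.

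\emph{Euler equation and its converse.} For any $\sigma\in\gD_p^\mathrm{TF}$ the line $\rho^\mathrm{TF}+t(\sigma-\rho^\mathrm{TF})$ stays in $\gD_p^\mathrm{TF}$ for $t\in[0,1]$, and the function $h(t):=\cE_p^\mathrm{TF}(\rho^\mathrm{TF}+t(\sigma-\rho^\mathrm{TF}))$ is convex and minimal at $t=0$, so $h'(0^+)\geq 0$, which gives
\begin{equation}
  \int_{\rz^3}\bigl((\rho^\mathrm{TF})^{p-1}-\phi\bigr)(\sigma-\rho^\mathrm{TF})\,\rd x\geq 0\qquad\forall\sigma\in\gD_p^\mathrm{TF},
\end{equation}
with $\phi=V-\rho^\mathrm{TF}*|\cdot|^{-1}$. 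Testing with $\sigma=\rho^\mathrm{TF}+\varepsilon\chi_{\{\rho^\mathrm{TF}>0\}}f$ for arbitrary bounded $f$ of compact support and small $\varepsilon$ of either sign (and dealing separately with $\sigma=(1+\varepsilon)\rho^\mathrm{TF}$ to get the $L^p$ derivative well-defined) yields $(\rho^\mathrm{TF})^{p-1}=\phi$ on $\{\rho^\mathrm{TF}>0\}$, while choosing $\sigma=\rho^\mathrm{TF}+\varepsilon f$ with $f\geq 0$ supported in $\{\rho^\mathrm{TF}=0\}$ gives $\phi\leq 0$ there. Together these are \eqref{TFp}. Conversely, if $\rho^\mathrm{TF}\in\gD_p^\mathrm{TF}$ solves \eqref{TFp}, then by convexity of $\cE_p^\mathrm{TF}$ for any $\sigma\in\gD_p^\mathrm{TF}$
\begin{equation}
  \cE_p^\mathrm{TF}(\sigma)-\cE_p^\mathrm{TF}(\rho^\mathrm{TF})\geq\int_{\rz^3}\bigl((\rho^\mathrm{TF})^{p-1}-\phi\bigr)(\sigma-\rho^\mathrm{TF})\,\rd x=\int_{\rz^3}(\phi_+-\phi)(\sigma-\rho^\mathrm{TF})\,\rd x,
\end{equation}
and the right-hand side is non-negative because $\phi_+-\phi=\phi_-\geq 0$ vanishes where $\rho^\mathrm{TF}>0$ and is multiplied there by $\sigma-\rho^\mathrm{TF}$ which is $\geq-\rho^\mathrm{TF}$, while on $\{\rho^\mathrm{TF}=0\}$ we have $\sigma-\rho^\mathrm{TF}=\sigma\geq 0$. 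Thus $\rho^\mathrm{TF}$ minimizes $\cE_p^\mathrm{TF}$.

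\emph{Main obstacle.} The delicate point is weak continuity of the potential term $\int V\rho$ when $V$ has Coulomb singularities at the nuclei: straightforward $L^{p'}$ arguments fail because $1/|x|\notin L^{p'}(\rz^3)$ globally. The splitting $V=V_<+V_>$ introduced in the proof of Lemma~\ref{lemma6} is precisely what resolves this, transferring the long-range part into the Coulomb Hilbert space where weak convergence is available.
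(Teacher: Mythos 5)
Your proof is correct and is precisely the ``standard compactness and convexity argument similar to the classical Thomas--Fermi case'' that the paper itself omits with a citation to Lieb--Simon: the direct method with the $V_<+V_>$ splitting to handle the linear term, strict convexity for uniqueness, and the one-sided variational inequality for the Euler equation and its converse are exactly the intended route. (Only a notational slip: in the weak-convergence step $D(\eta*|\cdot|^{-1},\rho_n)$ should read $D(\eta,\rho_n)$, since $D$ acts on charge densities rather than potentials.)
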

As the proof is a standard compactness and convexity argument similar
to the classical Thomas-Fermi case (Lieb and Simon
\cite{LiebSimon1977}) we skip its proof and remark that it holds also
for the molecular case.

\begin{lemma}
   \label{lemma5}
   If $\rho^\mathrm{TF}$ is a minimizer of $\cE_p^\mathrm{TF}$ for
   $p>3/2$. Then $\int_{\rz^3}\rho^\mathrm{TF}=Z$ and $\varphi\geq0$. Moreover
   $\rho^\mathrm{TF}$ is spherically symmetric in the atomic case.
\end{lemma}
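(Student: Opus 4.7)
The plan is to follow the classical Thomas--Fermi strategy of Lieb and Simon \cite{LiebSimon1977}, adapted to the general power $p$. The proof splits naturally into three pieces: (i) nonnegativity of the mean-field potential $\varphi:=V-\rho^\mathrm{TF}*|\cdot|^{-1}$; (ii) the identity $N:=\int_{\rz^3}\rho^\mathrm{TF}=Z$; and (iii) spherical symmetry in the atomic case.

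For (i) I would use the same subharmonic maximum-principle strategy that already appears in the proof of Lemma \ref{lemmag1}. Set $S:=\{x\in\rz^3\mid\varphi(x)<0\}$; the support of $\mu$ is automatically disjoint from $S$, since $\varphi\to+\infty$ at each $R_k$ and $V$ dominates $\rho^\mathrm{TF}*|\cdot|^{-1}$ near any atomic $\delta$-mass. On $S$ the Euler equation \eqref{TFp} forces $\rho^\mathrm{TF}=0$, hence $-\Delta\varphi=4\pi(\mu-\rho^\mathrm{TF})=0$ there, so $\varphi$ is harmonic on $S$, vanishes on $\partial S$, and tends to $0$ at infinity. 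The maximum principle then forces $\varphi\geq 0$ on $S$, so $S=\emptyset$.

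For (ii), once (i) is available, Newton's theorem applied to the integrable density $\rho^\mathrm{TF}$ gives $|x|\varphi(x)\to Z-N$ as $|x|\to\infty$, so $\varphi\geq 0$ immediately yields $N\leq Z$. For the reverse inequality I would argue by contradiction: if $N<Z$, then $\varphi(x)\geq (Z-N)/(2|x|)$ for all sufficiently large $|x|$, and \eqref{TFp} forces $\rho^\mathrm{TF}(x)\geq c\,|x|^{-1/(p-1)}$ there. Since $p>3/2$ gives $1/(p-1)<2<3$, this lower bound is not integrable near infinity, contradicting $N<\infty$. Hence $N=Z$. Claim (iii) is then immediate: for $K=1$ and $R_1=0$ the functional $\cE_p^\mathrm{TF}$ is invariant under $O(3)$, and strict convexity of $\rho\mapsto\frac1p\int\rho^p+D[\rho]$ makes the minimizer unique, so $\rho^\mathrm{TF}$ must be radial.

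The main technical obstacle is supplying the prerequisites that the maximum-principle and Newton arguments tacitly use: continuity and decay of $\varphi$ at infinity, and a decay rate for $\rho^\mathrm{TF}$ fast enough both to place it in $L^1(\rz^3)$ and to identify the limit $|x|\varphi(x)\to Z-N$. Both follow from a Sommerfeld-type comparison analogous to \eqref{eq:z16}--\eqref{eq:z18}, which yields $\varphi\leq s_{p,R}$ outside a neighbourhood of the nuclei and hence $\rho^\mathrm{TF}\leq s_{p,R}^{1/(p-1)}\leq C\,|x|^{-2/(2-p)}$; for $p\in(3/2,2)$ the exponent $2/(2-p)>4$ gives the required integrability and decay.
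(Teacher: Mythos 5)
Your proof is essentially correct, but it takes the route the paper deliberately sets aside: you follow the classical Lieb--Simon potential-theoretic scheme (prove $\varphi\geq0$ first by the maximum principle on $S=\{\varphi<0\}$, where \eqref{TFp} kills $\rho^{\mathrm{TF}}$ and makes $\varphi$ harmonic, then read off $N=Z$ from the asymptotics $|x|\varphi(x)\to Z-N$), whereas the paper gives a direct variational argument in the atomic case. There, for $Q>0$ one truncates $\rho^{\mathrm{TF}}$ at the radius $R$ enclosing charge $Z$ and shows via Newton's theorem that discarding the outer piece $\rho_>$ strictly lowers the energy by $D[\rho_>]$, contradicting minimality; for $Q<0$ one derives, as you do, a lower bound $\rho^{\mathrm{TF}}\gtrsim|x|^{-1/(p-1)}$ at infinity, but contradicts $D[\rho^{\mathrm{TF}}]<\infty$ (which is built into the domain $\gD_p^{\mathrm{TF}}$) rather than $N<\infty$; the sign of $\varphi$ is then obtained \emph{last}, from spherical symmetry plus Newton. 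The trade-off is exactly the one you identify at the end: the paper's route needs almost no regularity or decay input on $\varphi$ (only symmetry and Newton's theorem), while yours must first secure continuity of $\varphi$, its vanishing at infinity, and $\rho^{\mathrm{TF}}\in L^1$, which you correctly propose to extract from a Sommerfeld comparison as in \eqref{sommerfeldR}. In exchange, your argument does not rely on spherical symmetry and hence extends directly to the molecular case, which is precisely the remark the paper makes after its proof when it points to the subharmonic method of \cite{LiebSimon1977}. Two small points to tidy up: when $\mu$ has an absolutely continuous part $\sigma$, $\mathrm{supp}\,\mu$ need not be disjoint from $S$, but then $\Delta\varphi=-4\pi\sigma\leq0$ on $S$ still makes $\varphi$ superharmonic there, so the minimum principle goes through; and your deduction $N\leq Z$ from $\varphi\geq0$ tacitly assumes $N<\infty$, which you should either take from the Sommerfeld decay bound or obtain directly by integrating $\rho^{\mathrm{TF}}$ over balls and using $\rho^{\mathrm{TF}}*|\cdot|^{-1}\leq V$.
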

\begin{proof}
  The well known proof of the classical case using subharmonicity
  transcribes to the general case. Instead we give a simple
  variational proof for the atomic case: Suppose $Q<0$. Then, by
  \eqref{TFp}, there are $R,\epsilon>0$ such that
  $\varphi(x)\geq\epsilon/|x|$ for $|x|\geq R$. Thus
  $\rho^\mathrm{TF}(x)\gtrsim \rho_R(x):=|x|^{-{1\over p-1}}$ for $|x|>R$. But
  \begin{equation}
    D[\rho_R] \gtrsim  \int_R^\infty\rd r \int_R^\infty\rd s {r^2s^2\over
      \max\{r^{1\over p-1},s^{1\over p-1}\}}=\infty
  \end{equation}
  for $p\geq 4/3$.

  Suppose $Q>0$. Then, pick
  $R:=\inf\{s \big|\int_{|x|<s}\rd x\rho^\mathrm{TF}(x)\geq Z\}$. By assumption
  $R<\infty$. Now, define $\rho_R(x):=\rho^\mathrm{TF}(x)\theta(R-|x|)$ and
  $\rho_>:=\rho^\mathrm{TF}-\rho_R$ which by uniqueness of $\rho^\mathrm{TF}$ are spherically
  symmetric. We compute
  \begin{multline}
    \cE_p^\mathrm{TF}(\rho_R)-\cE_p^\mathrm{TF}(\rho^\mathrm{TF})
    < -\int_{\rz^3}\rd x {\rho_>(x)\over|x|} + 2D(\rho_R,\rho_>) -D[\rho_>]
    = -D[\rho_>]<0
  \end{multline}
  where the last equality holds because of Newton's theorem. Thus
  $\rho^\mathrm{TF}$ cannot be a minimizer which is a contradiction.

  By spherical symmetry and Newton's theorem it follows that $\varphi\geq0$.
\end{proof}
Note that we proved Lemma \ref{lemma5} by a simple and direct
variational argument using the spherical symmetry; however, the
theorem can be also be proven in the molecular case using subharmonic
estimates like Lieb and Simon \cite{LiebSimon1977} did. Since this
amounts to a mere transcriptions we skip it here.

\subsection{The Sommerfeld solution of generalized Thomas-Fermi theory
  and bounds on the physical solution}
In this appendix we allow for more general external potentials than
\eqref{eq:V}.

The classical atomic Sommerfeld solution
$S_\frac53(x)=9\gamma^3/(\pi^2|x|^4)$ solves
$ \Delta S_\frac53= 4\pi S_\frac53^\frac32$ on
$\dot{\rz}^3 :=\rz^3\setminus \{0\}$ and bounds the classical atomic
Thomas-Fermi potential from above. This generalizes to more general
$p$:
\begin{lemma}
  Pick $p\in(3/2,2)$
  \begin{equation}
    \label{eq:bsigma}
    b(p):=\left({(p-1)(3p-4)\over2\pi(2-p)^2}\right)^{p-1\over2-p},\ \text{and}\
    \sigma:={2(p-1)\over2-p}.
  \end{equation}
  Then
    \begin{equation}
    \label{Sommerfeld-generalize}
    S_p(x):= s_p(|x|):= b(p)|x|^{-\sigma}
  \end{equation}
  solves the generalized differential Thomas-Fermi equation
  \begin{equation}
    \label{gdTF}
    \Delta S_p= 4\pi S_p^{1\over p-1},
  \end{equation}
  on $\dot{\rz}^3$ and for the atomic case, i.e., $\mu=Z\delta$,
  \begin{equation}
    \label{eq:tfs}
    \phi\leq S_p
  \end{equation}
on $\dot\rz^3$.
\end{lemma}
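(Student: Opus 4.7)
My plan is to split the lemma into two nearly independent claims: the explicit verification that $S_p$ solves \eqref{gdTF}, and a maximum-principle comparison giving $\phi\le S_p$ in the atomic case. The second part will lean on Lemma \ref{lemma5}, which tells us the physical TF potential satisfies $\phi\ge0$ and $\int\rho^\mathrm{TF}=Z$.

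For the PDE, I would just compute, using the radial Laplacian $\Delta f(r)=r^{-1}(rf)''$, that $\Delta(r^{-\sigma})=\sigma(\sigma-1)r^{-\sigma-2}$. Plugging $S_p=b(p)r^{-\sigma}$ into \eqref{gdTF} and matching exponents forces $\sigma+2=\sigma/(p-1)$, giving $\sigma=2(p-1)/(2-p)$ as in \eqref{eq:bsigma}. Matching the prefactors then determines $b(p)$ via $b(p)^{(2-p)/(p-1)}=\sigma(\sigma-1)/(4\pi)$; using $\sigma(\sigma-1)=2(p-1)(3p-4)/(2-p)^2$ reproduces the stated formula for $b(p)$. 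This is pure algebra.

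For the comparison, combining the Euler equation \eqref{TFp} with $\phi\ge0$ from Lemma \ref{lemma5} yields $\rho^\mathrm{TF}=\phi^{1/(p-1)}$, and since $V$ is harmonic on $\dot\rz^3$ in the atomic case we obtain $\Delta\phi=4\pi\phi^{1/(p-1)}$ there. Thus $\phi$ and $S_p$ satisfy the \emph{same} semilinear equation on $\dot\rz^3$. I would then follow the exceptional-set scheme already used in Lemma \ref{lemmag1} and in the proof of Theorem \ref{theorem2}: set $w:=\phi-S_p$ and $E:=\{x\in\dot\rz^3:w(x)>0\}$. On $E$ both $\phi$ and $S_p$ are strictly positive, and since $t\mapsto t^{1/(p-1)}$ is strictly increasing on $\rz_+$,
\[
  \Delta w=4\pi\bigl(\phi^{1/(p-1)}-S_p^{1/(p-1)}\bigr)>0,
\]
so $w$ is subharmonic on $E$, while $w=0$ on $\partial E\cap\dot\rz^3$.

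It remains to control the two ends so that the maximum principle closes. Near the origin $\phi(x)\le V(x)=Z/|x|$ whereas $S_p(x)=b(p)|x|^{-\sigma}$ with $\sigma>2$ for $p>3/2$, so $\phi<S_p$ in a punctured neighbourhood of $0$ and $\overline E$ is separated from the origin. At infinity, atomic neutrality $\int\rho^\mathrm{TF}=Z$ together with Newton's theorem for spherically symmetric minimizers gives $\phi(x)\to0$, while $S_p(x)\to0$ as well; combined with $\phi\ge 0$ this forces $\limsup_{|x|\to\infty,\,x\in E}w(x)\le 0$. A Phragm\'en--Lindel\"of-style exhaustion $E\cap B_R\to E$ then delivers $w\le 0$ on $E$, contradicting the defining inequality; hence $E=\emptyset$. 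I expect the only genuinely delicate step to be precisely this last one --- turning pointwise decay of $w$ into a global maximum-principle conclusion on a potentially unbounded $E$; everything else is either direct algebra or a routine reuse of the subharmonic machinery already exploited earlier in the paper.
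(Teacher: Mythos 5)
Your proposal is correct and follows essentially the same route as the paper: a direct algebraic verification that the exponent and prefactor in $S_p$ are forced by \eqref{gdTF}, followed by a maximum-principle comparison using $\phi(x)\le Z/|x|\ll S_p(x)$ near the origin (since $\sigma>2$ for $p>3/2$) and the decay of both $\phi$ and $S_p$ at infinity. The paper merely states the comparison step tersely, whereas you spell out the exceptional-set/subharmonicity details; no substantive difference.
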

\begin{proof}
  Inserting \eqref{Sommerfeld-generalize} into \eqref{gdTF} yields
  \begin{equation}
    {b(p)\over r^{2(p-1)+4-2p\over2-p}}
    \left[{2(p-1)\over2-p}{2p-2+2-p\over2-p}-2{2(p-1)\over2-p}
    \right]
    =4\pi b(p)^{1\over p-1}r^{-{2\over2-p}}
  \end{equation}
  which is equivalent with
  \begin{equation}
     b(p)
    \left[{2(p-1)\over2-p}{p\over2-p}-2{2(p-1)\over2-p}
    \right]
    =4\pi b(p)^{1\over p-1}
  \end{equation}
  or
   \begin{equation}
     {2(p-1)(3p-4)\over(2-p)^2} =4\pi b(p)^{2-p\over p-1}.
   \end{equation}
   Inserting the definition of $b(p)$ from
   \eqref{Sommerfeld-generalize} gives identical left and right
   sides of \eqref{gdTF}.

   The bound \eqref{eq:tfs} follows then from the maximum principle,
   since the inequality is certainly true in a neighborhood of zero,
   since $\phi(x)\leq Z/|x|$ and both $\phi(x),S_p(x)\to0$ as
   $x\to\infty$.
 \end{proof}

 \begin{lemma}
   \label{simple-bound}
   Suppose $p\in(3/2,2)$, $V$ as in Lemma \ref{lemma6},
   $\rho^\mathrm{TF}$ is the minimizer of $\cE_p^\mathrm{TF}$,
   $\phi:=V-\rho^\mathrm{TF}*|\cdot|^{-1}$, and
  \begin{equation}
    \label{eq:ap}
    a(p):=\left({(p-1)p\over2\pi(2-p)^2}\right)^{p-1\over2-p},\ \text{and}\
    \sigma:={2(p-1)\over2-p}.
  \end{equation}
  Then for $|x|>R$,
  \begin{equation}
    \label{differential-TF-equation}
    \Delta\phi=4\pi\phi^\frac1{p-1},
  \end{equation}
  and
  \begin{equation}
    \label{sommerfeldR}
    \phi(x) \leq S_{p,R}(x):=s_{p,R}(|x|):={a(p)\over (|x|-R)^\sigma}.
  \end{equation}
\end{lemma}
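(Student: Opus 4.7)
The plan is to first reduce the Euler equation \eqref{TFp} to a scalar PDE outside the support of $\mu$, and then to exhibit $s_{p,R}$ as a supersolution whose coefficient $a(p)$ has been tuned precisely for the required cancellation.

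For \eqref{differential-TF-equation}, since the measure $\mu$ has compact support which we may assume is contained in $\overline{B_R(0)}$, the potential $V=\mu*|\cdot|^{-1}$ is harmonic on $|x|>R$. By Lemma \ref{lemma5} one has $\phi\geq 0$, so the Euler equation \eqref{TFp} simplifies to $\rho^{\mathrm{TF}}=\phi^{1/(p-1)}$. Taking the Laplacian of $\phi=V-\rho^{\mathrm{TF}}*|\cdot|^{-1}$ and using $-\Delta(1/|x|)=4\pi\delta$ then yields \eqref{differential-TF-equation} on $|x|>R$.

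For \eqref{sommerfeldR} I would first verify in radial coordinates that $s_{p,R}$ is a strict supersolution of \eqref{differential-TF-equation} on $r>R$. The two identities $\sigma+2=\sigma/(p-1)$ and $\sigma(\sigma+1)=4\pi a(p)^{(2-p)/(p-1)}$ (the latter being the defining property of $a(p)$) reduce a direct computation to
\[
\Delta s_{p,R}(r)-4\pi s_{p,R}(r)^{1/(p-1)}=-2\sigma\,a(p)\,(1-R/r)(r-R)^{-\sigma-2},
\]
which is strictly negative for $r>R$. A standard comparison argument then closes the proof: setting $w:=\phi-s_{p,R}$, monotonicity of $t\mapsto t^{1/(p-1)}$ together with \eqref{differential-TF-equation} and the supersolution inequality gives $\Delta w>0$ on the open set $\{w>0\}$, so $w$ is subharmonic there. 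On the relative boundary of $\{w>0\}$ one has $w=0$; as $|x|\to R^+$, $s_{p,R}\to\infty$ while $\phi$ stays bounded, so $w\to-\infty$; and $w\to 0$ at infinity because $Q=0$ by Lemma \ref{lemma5} together with Newton's theorem. The maximum principle thus forces $\{w>0\}$ to be empty, which proves \eqref{sommerfeldR}.

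The main technical obstacle is the algebraic bookkeeping that makes $s_{p,R}$ a supersolution: it is the specific exponent $\sigma=2(p-1)/(2-p)$ paired with the coefficient $a(p)$ that annihilates the leading term in the expression for $\Delta s_{p,R}$ and ensures the surviving $R/r$-correction enters with the correct negative sign on $r>R$.
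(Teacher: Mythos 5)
Your proposal is correct and follows essentially the same route as the paper: derive the pointwise equation $\Delta\phi=4\pi\phi^{1/(p-1)}$ from the Euler and Poisson equations outside the support of $\mu$, verify by the radial computation (using exactly the identities $\sigma+2=\sigma/(p-1)$ and $\sigma(\sigma+1)=4\pi a(p)^{(2-p)/(p-1)}$) that $s_{p,R}$ is a supersolution, and conclude by comparison using the blow-up of $s_{p,R}$ at $|x|=R$ and the decay of both functions at infinity. Your version is if anything slightly more careful about the boundary behavior at $|x|=R$ than the paper's.
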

\begin{proof}
  The generalized Thomas-Fermi equation \eqref{TFp} and Poisson's
  equation $\Delta\varphi=4\pi\rho^\mathrm{TF}$ imply
  \eqref{differential-TF-equation}.

  To show \eqref{sommerfeldR} we first show that  on
  the complement of $\overline{B_R(0)}$
  \begin{equation}
    \Delta S_{p,R}
    \leq 4\pi S_{p,R}^{1\over p-1}:
  \end{equation}
  \begin{equation}
    \begin{split}
   & s_{p,R}''(r) +\tfrac2rs_{p,R}'(r)= \sigma(\sigma+1){a(p)\over (r-R)^{\sigma+2}}
    -2\sigma{a(p)\over r(r-R)^{\sigma+1}}\\
    \leq  &{\sigma(\sigma+1)\over4\pi}
      a(p)^{-{2-p\over p-1}} 4\pi s_{p,R}(r)^{1\over p-1}=4\pi s_{p,R}(r)^{1\over p-1}
    \end{split}
  \end{equation}
  and that the inequality is true on $\partial B_R(0)$ and both
  $S_{p,R} $ and $\phi$ tend to zero at infinity. Thus,
  the inequality follows for all $|x|>R$ by subharmonicity.
\end{proof}

Now we turn the Sommerfeld solution with a leading remainder term
extending a result by Brezis and Lieb \cite{BrezisLieb1979}, Solovej
\cite[Lemma 11]{Solovej1990}\cite[Lemma 4.4]{Solovej2003}. We will
largely follow his proof.
\begin{lemma}
    Pick $p\in(3/2,2)$, $V$ as in Lemma \ref{lemma6},
  \begin{equation}
    \label{zeta}
  \zeta:={-5p+6 + \sqrt{p^2+20p-28}\over2(2-p)},
\end{equation} and a
  smooth function $\pi$ on $B_R(0)^c$ fulfilling
  \begin{equation}
    \Delta\phi = 4\pi\phi^{1\over p-1}.
  \end{equation}
  Moreover, define
  \begin{align}
    a(R) := & \liminf_{r\searrow}\sup_{|x|=r}\left[\left({\phi(x)\over s_{p,\gamma}(|x|)}\right)^{-\frac12}-1\right]|x|^\zeta,\\
    A(R):=&\liminf_{r\searrow}\sup_{|x|=r}\left[{\phi(x)\over s_{p,\gamma}(|x|)}-1\right]|x|^\zeta
  \end{align}
  Then on $B_R(0)^c$
  \begin{equation}
    \label{Sommerfeld2.0}
    (1+a(R)|x|^{-\zeta})^{-{p-1\over p-2}} s_{p,\gamma}(|x|)
    \leq\phi(x)\leq(1+A(R)|x|^{-\zeta}) s_{p,\gamma}(|x|).
    \end{equation}
  \end{lemma}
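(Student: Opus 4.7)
The strategy is to reduce the nonlinear PDE $\Delta\phi=4\pi\phi^{1/(p-1)}$ to a one-dimensional autonomous ODE, identify $\zeta$ via an indicial analysis, construct explicit radial sub- and super-solutions of the specific algebraic form appearing in the statement, and apply the comparison principle on the exterior domain $B_{R}(0)^{c}$. The substitution $\phi(x)=s_{p,\gamma}(|x|)\,f(|x|)$, using $\Delta S_{p}=4\pi S_{p}^{1/(p-1)}$ and $S_{p}'/S_{p}=-\sigma/r$, transforms the equation (after division by $s_{p,\gamma}$) into
\begin{equation*}
f''+\frac{2(1-\sigma)}{r}\,f'\;=\;\frac{C}{r^{2}}\bigl(f^{1/(p-1)}-f\bigr),\qquad C:=\sigma(\sigma-1),
\end{equation*}
which becomes autonomous under $t=\log r$. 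Linearising around the equilibrium $f\equiv 1$ yields the characteristic equation $\lambda^{2}-(2\sigma-1)\lambda-C(2-p)/(p-1)=0$; its unique negative root is $-\zeta$, with $\zeta$ as in \eqref{zeta}, pinning $\zeta$ down as the sharp decay rate of the leading correction to the Sommerfeld profile and fixing the ansatz for both bounds.

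For the upper bound I would verify that $\phi_{+}(x):=s_{p,\gamma}(|x|)\bigl(1+A(R)|x|^{-\zeta}\bigr)$ is a super-solution of $-\Delta+4\pi(\cdot)^{1/(p-1)}\ge 0$ on $B_{R}(0)^{c}$: after the substitution, the indicial identity $\zeta(\zeta+2\sigma-1)=C(2-p)/(p-1)$ cancels the leading-order terms, and the remaining pointwise inequality $(1+y)^{1/(p-1)}-(1+y)-\tfrac{2-p}{p-1}\,y\ge 0$ for $y>-1$ follows from the convexity of $y\mapsto(1+y)^{1/(p-1)}$ (since $1/(p-1)>1$ when $p<2$) together with the vanishing of value and derivative at $y=0$. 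For the lower bound, the ansatz $\phi_{-}(x):=s_{p,\gamma}(|x|)\bigl(1+a(R)|x|^{-\zeta}\bigr)^{\beta}$, with $\beta$ the exponent appearing in the statement and $g:=1+a(R)|x|^{-\zeta}$, exploits the algebraic identity $q\beta=\beta\pm 1$ (with $q:=1/(p-1)$) that is forced by this choice of $\beta$: it makes $f_{-}^{1/(p-1)}-f_{-}$ factor as $g^{\beta\pm 1}-g^{\beta}=\pm g^{\beta\mp 1}(g-1)$, and combining this with $\beta\zeta(\zeta+2\sigma-1)=\pm C$ causes the first-order terms to cancel exactly, leaving only a residual proportional to $a(R)^{2}\,\beta(\beta-1)\,\zeta^{2}\,g^{\beta-2}\,|x|^{-2\zeta-2}$ whose sign, by inspection of the signs of $\beta$ and $\beta-1$, is precisely what the sub-solution inequality demands. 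Applying the comparison principle for $-\Delta+4\pi(\cdot)^{1/(p-1)}$ on $B_{R}(0)^{c}$ --- with boundary data on $\partial B_{R}(0)$ enforced by the sup-over-spheres definitions of $a(R)$ and $A(R)$ (after passage $r\searrow R$) and $\phi-\phi_{\pm}\to 0$ at infinity --- then yields \eqref{Sommerfeld2.0}.

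The hard part is that the sub-/super-solution property must hold globally on $\{|x|>R\}$, not merely asymptotically as $r\to\infty$. For the upper bound this is automatic from global convexity of $(1+\cdot)^{1/(p-1)}$, but for the lower bound a naive linear ansatz $1+a\,|x|^{-\zeta}$ is doomed: with $1/(p-1)<2$, the quadratic Taylor remainder carries the wrong sign for a sub-solution. It is precisely the nonlinear ansatz $g^{\beta}$ --- with $\beta$ tuned so that $g^{q\beta}$ telescopes algebraically against $g^{\beta}$ --- that turns this obstruction into a bookkeeping exercise of checking a fixed sign of a quadratic-in-$a(R)|x|^{-\zeta}$ remainder, and it is therefore the technical heart of the proof. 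The passage from radial sub-/super-solutions to a possibly non-radial $\phi$ is routine via the strong maximum principle on the exterior domain.
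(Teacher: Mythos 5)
Your barrier construction is essentially the paper's own proof, reorganized: the paper likewise takes $\omega^+_k=(1+kr^{-\zeta})s_p$ and $\omega^-_k=(1+kr^{-\zeta})^{-\frac{p-1}{p-2}}s_p$, verifies the super-/subsolution inequalities by direct computation, and fixes $\zeta$ by the quadratic $\zeta^2+\tfrac{5p-6}{2-p}\zeta=\tfrac{2(3p-4)}{2-p}$ --- exactly the indicial equation your $t=\log r$ linearization produces. Your convexity inequality $(1+y)^{1/(p-1)}\geq 1+\tfrac{1}{p-1}y$ is the paper's Bernoulli step for the supersolution, and your quadratic residual $\propto\beta(\beta-1)\zeta^2$ is the paper's ``drop a non-negative summand containing $(a'\zeta)^2$'' for the subsolution. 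One caution: the exact cancellation of the first-order terms in the subsolution requires the \emph{negative} exponent $\tfrac{p-1}{p-2}=-\tfrac{p-1}{2-p}$, so that $q\mu=\mu-1$ and $\mu(q-1)=-1$ match the factorization $g^{\mu-1}-g^{\mu}=-g^{\mu-1}(g-1)$ on the right-hand side. With the exponent $+\tfrac{p-1}{2-p}$ as literally printed in the statement, the two linear terms differ by a factor of $g$ and the residual condition degenerates to $\mu(\mu-1)\zeta^2\geq Cg$ with $C=\sigma(\sigma-1)$, which fails near $|x|=R$ when $a(R)$ is large; the statement's sign appears to be a typo (compare the exponent $-2$ at $p=5/3$ in Solovej's version), so be sure to run your telescoping with the negative branch of your ``$q\beta=\beta\pm1$''.

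The genuine gap is the step you dismiss as routine, namely ``$\phi-\phi_\pm\to0$ at infinity'' followed by the comparison principle on the exterior domain. The lemma assumes only that $\phi$ is a nonnegative smooth solution of $\Delta\phi=4\pi\phi^{1/(p-1)}$ on $B_R(0)^c$; no decay at infinity is hypothesized, and $a(R)$, $A(R)$ control $\phi$ only on spheres near $|x|=R$. To close the maximum-principle argument on the unbounded domain you must first establish the universal a priori bound $\phi(x)\lesssim_p|x|^{-2(p-1)/(2-p)}$ for $|x|\geq 4R$, independent of any boundary data. The paper spends roughly half of its proof on exactly this, constructing the Brezis--Lieb/Keller--Osserman-type barrier $C\bigl(s_{p,\gamma}(r-L/4)+s_{p,\gamma}(L-r)\bigr)$ on the annulus $L/4<r<L$, which is a supersolution blowing up on both boundary spheres, whence $\sup_{|x|=L/2}\phi\lesssim_p L^{-2(p-1)/(2-p)}$ for all large $L$. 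Without this estimate your comparison argument does not exclude solutions that fail to decay, and the assertion that $\phi-\phi_\pm$ vanishes at infinity is unjustified. Add that barrier step and your plan coincides with the paper's proof.
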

  Note that for the classical exponent $p=5/3$, $\zeta=(\sqrt{73}-7)/2$ which agrees with \cite[Lemma 4.4]{Solovej2003}.
\begin{proof}
  We start by proving that $\phi$ tends to zero at infinity. To
  this end pick $L> 4R$ and define the function
  $f(x):= C (s_{p,\gamma}(r-L/4)+s_{p,\gamma}(L-r))$ on $(L/4,L)$ writing $r:=|x|$. 
  \begin{multline}
    \Delta f(x) = C ( S''(r-L/4)+\frac2r S'(r-L/4)+  S''(L-r)-\frac2r S'(L-r))\\
   = Cb(p)\gamma^\frac1{2-p}\left[ {2(p-1)(2p-2+2-p)\over (2-p)^2}{1\over(r-\tfrac L4)^\frac{2p-2+4-2p}{2-p}}
      -\frac2r {2(p-1)\over 2-p} {1\over (r-\tfrac L4)^\frac p{2-p}}
    \right.\\
    \left.  +{2(p-1)(2p-2+2-p)\over (2-p)^2}{1\over(L-r)^\frac{2p-2+4-2p}{2-p}}
      +\frac2r {2(p-1)\over 2-p} {1\over (L-r)^\frac p{2-p}}
    \right]\\
    = Cb(p)\gamma^\frac1{2-p}{2(p-1)\over 2-p}\\
    \times \left[ {p\over 2-p}{1\over(r-\tfrac L4)^\frac{2}{2-p}}
       -\frac2r {1\over (r-\tfrac L4)^\frac p{2-p}}
      +{p\over 2-p}{1\over(L-r)^\frac2{2-p}}
       +\frac2r  {1\over (L-r)^\frac p{2-p}}
     \right].
   \end{multline}
   We wish to estimate the sum of the second and fourth term of the
   last bracket by a multiple of the third one, i.e., show that
   \begin{equation}
     -\frac2r {1\over (r-\tfrac L4)^\frac p{2-p}} +\frac2r  {1\over (L-r)^\frac p{2-p}}\leq D {1\over(L-r)^\frac2{2-p}}.
   \end{equation}
   By scaling it is sufficient to show this for $L=1$, i.e.,
 \begin{equation}
     -\frac2r {1\over (r-\tfrac14)^\frac p{2-p}} +\frac2r  {1\over (1-r)^\frac p{2-p}}\leq D {1\over(1-r)^\frac2{2-p}}
   \end{equation}
   which is equivalent with
   \begin{equation}
    - 2(1-r)^\frac2{2-p}+2(1-r)(r-\tfrac14)^\frac p{2-p}\leq D r (r-\tfrac14)^\frac p{2-p}
   \end{equation}
   or
    \begin{equation}
     -2(1-r)^\frac2{2-p}+2(r-\tfrac14)^\frac p{2-p}\leq (D-2) r (r-\tfrac14)^\frac p{2-p}
   \end{equation}
   which is fulfilled if
    \begin{equation}
     -2(1-r)^\frac2{2-p}+2(r-\tfrac14)^\frac p{2-p}\leq {D-2\over4}(r-\tfrac14)^\frac p{2-p}
   \end{equation}
   holds, since $r>L/4$. This is equivalent with
    \begin{equation}
     -2(1-r)^\frac2{2-p}\leq \left(\frac D4-\frac52\right)(r-\tfrac14)^\frac p{2-p}=0
   \end{equation}
   which is true, since we picked $D=10$.
   Thus we get altogether
   \begin{multline}
     \Delta f(x)\leq Cb(p)\gamma^\frac1{2-p}{2(p-1)\over 2-p} \left[
       {p\over 2-p}{1\over(r-\tfrac L4)^\frac{2}{2-p}} +{20-10p+p\over
         2-p}{1\over(L-r)^\frac2{2-p}} \right]\\
    = Cb(p)\gamma^\frac1{2-p}{2(p-1)\over (2-p)^2} \left[
       p{1\over(r-\tfrac L4)^\frac{2}{2-p}} +(20-9p){1\over(L-r)^\frac2{2-p}} \right].
   \end{multline}
Moreover, for $p\in(1,2)$
\begin{equation}
  f(r)^\frac1{p-1}\geq C^\frac1{p-1}\left(s_{p,\gamma}(r-L/4)^\frac1{p-1} \right).
\end{equation}
Thus, for each $p$ and $\gamma$ there is a constant $C$ such that for
all $L>4R$ $f$ is a supersolution of the generalized differential TF
equation, i.e.,
\begin{equation}
  \Delta f \leq 4\pi f^{1\over p-1}.
\end{equation}
on $M_L:=\overline{B_L(0)}\setminus B_{L/4}(0)$.  Thus the maximum
principle implies $f\geq \phi$ on $A_L$ and
$\sup_{|x|=L/2}\phi(x)\lesssim_{p} L^{2(p-1)\over2-p}$. Thus
$\phi(x)|x|^{2(p-1)\over2-p}$ is bounded for $|x|>R$.

Next we turn to the main estimate. For any $k\in \rz$ we define
\begin{align}
  \omega^+_k(x):=&(1+k|x|^{-\zeta})s_{p}(|x|), \label{+}\\
  \omega^-_{k}(x):= &(1+k|x|^{-\zeta})^{-{p-1\over p-2}}s_{p}(|x|).\label{-}
\end{align}
If we pick $R'>R$ and set $k=A':=A(R')$ and $k=a':=a(R')$ then
$\omega^+_{A'}$ and $\omega^+_{a'}$ are right and left side of the
main estimate \eqref{Sommerfeld2.0} but at $R'$ instead of $R$.

We claim that
\begin{equation}
  \label{omega}
  \Delta \omega^+_{k}\leq 4\pi (\omega^+_{k}/\gamma)^{1\over p-1}\ \text{and}\
  \Delta \omega^-_{k}\geq 4\pi (\omega^-_{k}/\gamma)^{1\over p-1}.
\end{equation}
(The first inequality for $p=5/3$ was known to Brezis and Lieb (first
inequality above Proposition A.5 in \cite{BrezisLieb1979}).)

Since $\omega^+_{A'}(R')=\phi(R')=\omega^-_{a'}(R')$ and since both
functions tend to zero at infinity, this would imply the claim is true
by a standard application of the maximum principle and taking the
limit $R'\to R$.

Thus it remains to show that \eqref{omega} is really true. We compute starting with the supersolution:
\begin{multline}
  \Delta \omega^+_k(x)\\
  = \left(S''(|x|)+{2\over|x|}S'(|x|)\right)(1+{k\over |x|^\zeta})
  -2S'(|x|) k {\zeta\over|x|^{\zeta+1}} +s_{p,\gamma}(|x|)k{\zeta(\zeta+1)-2\zeta\over|x|^{\zeta+2}}\\
  =4\pi\left(s_{p,\gamma}(r)\over\gamma\right)^{1\over p-1}(1+{k\over r^\zeta}) + s_{p,\gamma}(r)k\zeta r^{-\zeta-2}\left( 2^2{p-1\over p-2}+ \zeta-1\right)\\
  =4\pi\left(s_{p,\gamma}(r)\over\gamma\right)^{1\over p-1}(1+{k\over r^\zeta}) + s_{p,\gamma}(r)k\zeta \left({s_{p,\gamma}(r)\over b(p) \gamma^{1\over2-p}}\right)^{2-p\over p-1}r^{-\zeta}\left( {5p-6\over p-2}+ \zeta\right)\\
  =4\pi\left(s_{p,\gamma}(r)\over\gamma\right)^{1\over p-1}\left[1+{k\over r^\zeta} +
    {k\zeta\over 4\pi b(p)^{2-p\over p-1}r^\zeta}\left({5p-6\over p-2}+ \zeta\right)\right]\\
  =4\pi\left(\omega^+_k\over\gamma\right)^{1\over p-1}
  {1+{k\over r^\zeta}\left[ 1+{2\pi(p-2)^2\zeta\over 4\pi (p-1)(3p-4)}\left({5p-6\over p-2}+ \zeta\right)\right]\over (1+kr^{-\zeta})^{1\over p-1}}.
\end{multline}
Thus, the claim would follow if the big fraction of the last line is not bigger than one. Since $p-1\leq1$, it suffices to show that
\begin{equation}
  1+{k\over r^\zeta}\left[ 1+{(p-2)^2\over 2 (p-1)(3p-4)}\zeta\left({3p-2\over p-2}+ \zeta\right)\right]\leq 1 +(p-1)^{-1}k r^{-\zeta}
\end{equation}
or
\begin{equation}
{(p-2)^2\over 2(3p-4)}\zeta\left({5p-6\over p-2}+ \zeta\right)\leq 2-p
\end{equation}
which is fulfilled, if
\begin{equation}
  \zeta^2+{5p-6\over 2-p}\zeta = 2{3p-4\over 2-p}
\end{equation}
which true for the chosen $\zeta$.

Next we treat the subsolution
\begin{multline}
  \Delta \omega^-_{a'}(x)\\
  \geq \left(S''(r)+{2\over|x|}S'(r)\right)\left(1+{a'\over
      r^\zeta}\right)^{-{p-1\over2-p}}
  +2S'(r) {p-1\over2-p}\left(1+{a'\over r^\zeta}\right)^{-{1\over2-p}}{a'\zeta\over r^{\zeta+1}}\\
  +s_{p,\gamma}(r){p-1\over2-p}\left[-\left(1+{a'\over r^{\zeta}}\right)^{-{1\over
      2-p}}{a'\zeta(\zeta+1)\over r^{\zeta+2}}
  +2\left(1+{a'\over r^\zeta}\right)^{-{1\over2-p}}{a'\zeta\over r^{\zeta+2}}\right]\\
 ={4\pi\over\gamma^{1\over p-1}}s_{p,\gamma}(r)^{1\over p-1}\left(1+{a'\over
     r^\zeta}\right)^{-{p-1\over2-p}}\\
+{s_{p,\gamma}(r)\over r^2}{(p-1)\over2-p}\left[-{4(p-1)\over2-p}\left(1+{a'\over r^\zeta}\right)^{-{1\over2-p}}{a'\zeta\over r^\zeta} -\left(1+{a'\over
      r^{\zeta}}\right)^{-{1\over 2-p}}{a'\zeta(\zeta-1)\over r^{\zeta}}\right]\\
={4\pi\over\gamma^{1\over p-1}}s_{p,\gamma}(r)^{1\over
  p-1} \left\{\left(1+{a'\over r^\zeta}\right)^{-{p-1\over2-p}}
  +{2\pi(2-p)^2\over 4\pi(p-1)(3p-4)}{(p-1)\over2-p}\right.\\
\times \left[-{4(p-1)\over2-p}\left(1+{a'\over
      r^\zeta}\right)^{-{1\over2-p}}{a'\zeta\over r^\zeta}
  \left.-\left(1+{a'\over
          r^{\zeta}}\right)^{-{1\over 2-p}}{a'\zeta(\zeta-1)\over r^{\zeta}}\right]\right\}\\
 ={4\pi\over\gamma^{1\over p-1}}s_{p,\gamma}(r)^{1\over
    p-1}\left\{\left(1+{a'\over
        r^\zeta}\right)^{-{p-1\over2-p}}\right.\\
\left.+\frac12{2-p\over3p-4} \left[-{4(p-1)\over2-p}\left(1+{a'\over
      r^\zeta}\right)^{-{1\over2-p}}{a'\zeta\over r^\zeta}
  -2\left(1+{a'\over
          r^{\zeta}}\right)^{-{1\over 2-p}}{a'\zeta(\zeta-1)\over r^{\zeta}}\right]\right\}\\
  ={4\pi\over\gamma^{1\over p-1}}\omega^-_{a'}(r)^{1\over
    p-1}\left\{1+\left[1 -2{p-1\over3p-4}\zeta
      -\frac12{2-p\over3p-4}\zeta(\zeta-1)\right]{a'\over r^{\zeta}}\right\}
  ={4\pi\over\gamma^{1\over p-1}}\omega^-_{a'}(r)^{1\over
    p-1}
\end{multline}
where we drop a non-negative summand containing $(a'\zeta)^2$ in the
first inequality and use \eqref{zeta} in the last step.  Thus
$\omega^-_{a'}$ is indeed a subsolution.  The result follows by taking
the limits $\lim_{r'\searrow r}a(r')$ and $\lim_{r'\searrow r}A(r')$.
\end{proof}

Equipped with two atomic supersolutions, namely $S_{p,R}$ and
$\omega^+_k$, of the generalized Thomas-Fermi equation suggests that
the pointwise minimum bounds $\phi$ from above. This would improve the
bound for large $r$, since the coefficient $a(p)$ of the leading term
of $S_{p,R}$ for large $r$ is larger than the coefficient $b(p)$ of
the leading term of $\omega_k^+$. In fact this is true:
\begin{lemma}
  \label{better-bound}
  Assume $p\in(3/2,2)$,$r>R>0$, $k\in\rz^3$, $\zeta$ as in
  \eqref{zeta}, and $\omega^+_k$ as defined in \eqref{+}. Then
  \begin{equation}
    \label{sbound}
    \phi(r)\leq \sigma_p(r):=\min\{s_{p,R}(r),\omega^+_k(r)\}.
  \end{equation}
\end{lemma}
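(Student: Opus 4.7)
The plan is to recognize that the stated inequality is nothing more than the conjunction of two upper bounds on $\phi$ that have already been established in this appendix, and so the proof reduces to invoking them on a common domain and taking the pointwise minimum. Specifically, Lemma~\ref{simple-bound} already yields $\phi(r) \leq s_{p,R}(r)$ for $r > R$, while the preceding lemma (the Sommerfeld expansion with remainder) yields, for an admissible $k$, the bound $\phi(r) \leq \omega^+_k(r)$ on the same region. Taking the pointwise minimum of the two right-hand sides then gives $\phi \leq \sigma_p$, which is the content of the lemma.

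The only step beyond quoting the two previous lemmas is to verify that $\phi \leq \omega^+_k$ holds on all of $\{|x| > R\}$ for the $k$ appearing in the statement. The direct computation performed inside the proof of the preceding lemma established the supersolution inequality $\Delta \omega^+_k \leq 4\pi(\omega^+_k)^{1/(p-1)}$ for every $k \geq 0$, i.e., $\omega^+_k$ is a supersolution of the generalized Thomas--Fermi equation on $\{|x| > R\}$. Since $\omega^+_k(x) \to 0$ as $|x| \to \infty$ and $\phi$ solves the same equation with the same decay at infinity, the standard maximum-principle argument applied to $\omega^+_k - \phi$ on the exterior of $\overline{B_R(0)}$ delivers $\phi \leq \omega^+_k$ as soon as the boundary inequality $\omega^+_k(R) \geq \phi(R)$ is in force; this is the (implicit) admissibility condition on $k$, and the construction $k = A(R')$ from the preceding lemma, combined with the limit $R' \searrow R$, supplies such a $k$.

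The main, and essentially only, obstacle is this bookkeeping of admissible $k$; once both bounds hold on the common region $\{|x| > R\}$, the pointwise minimum inequality is automatic. Conceptually the usefulness of the lemma is that for large $r$ the $\omega^+_k$ side is strictly sharper, because the leading coefficient $b(p)$ of $\omega^+_k$ is strictly smaller than the leading coefficient $a(p)$ of $s_{p,R}$, whereas near $r = R$ the bound $s_{p,R}$ is the one that blows up and carries the boundary behavior. The minimum therefore automatically selects the better of the two regimes, so no new maximum-principle argument for $\sigma_p$ itself is required.
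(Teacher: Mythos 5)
Your argument is correct, but it takes a genuinely different route from the paper's. You derive $\phi\le\omega^+_k$ on \emph{all} of $\{|x|>R\}$ by invoking (or re-running) the maximum-principle argument of the preceding Sommerfeld lemma with the admissible choice $k=A(R)$ (obtained as the limit $R'\searrow R$ of $k=A(R')$), combine this with $\phi\le s_{p,R}$ from Lemma~\ref{simple-bound}, and then the pointwise minimum bound is immediate because both inequalities hold on the whole exterior region. The paper instead avoids imposing any boundary condition on $k$ at $|x|=R$: it locates an intersection point $r_0$ of the two supersolutions, observes that $\omega^+_k(r_0)=s_{p,R}(r_0)\ge\phi(r_0)$ by Lemma~\ref{simple-bound}, and runs the comparison of $\omega^+_k$ against $\phi$ only on $[r_0,\infty)$. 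The trade-off is worth noting: the paper's argument nominally admits more general $k$, but it establishes $\phi\le\omega^+_k$ only beyond $r_0$ and therefore tacitly requires $\min\{s_{p,R},\omega^+_k\}=s_{p,R}$ on $(R,r_0)$ --- which is delicate near $r=R$, where $s_{p,R}$ blows up while $\omega^+_k$ stays finite, so that the minimum there is in fact $\omega^+_k$. Your version sidesteps this entirely, at the cost of making explicit an admissibility condition on $k$ (e.g.\ $k\ge A(R)$, so that $\omega^+_k\ge\phi$ near $\partial B_R(0)$) that the statement of Lemma~\ref{better-bound} leaves implicit; since for arbitrary $k\in\rz$ the claim is false (for $k$ sufficiently negative $\omega^+_k<0\le\phi$ near $|x|=R$), reading the hypothesis this way is the right call, and your bookkeeping of where the boundary inequality comes from is exactly the point that needs to be supplied.
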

\begin{proof}
  Both functions are supersolutions of the Thomas-Fermi equation and $s_{p,R}(r)\geq \phi(r)$. Moreover, the two function have exactly one point of intersection $r_0$ for $r>R$.
  Thus $\omega_k^+(r_0)=s_{p,R}(r_0)\geq\phi(r_0)$. Since $\omega_k^+(r)\to0$ as $r\to\infty$ we have $\omega_k^+(r)\geq\phi(r)$ for $r\geq r_0$. 
\end{proof}

	
\def\cprime{$'$}

\end{document}